\documentclass[aps,pra,twocolumn,notitlepage,superscriptaddress,showpacs,nofootinbib]{revtex4-1}
\usepackage{amsmath}
\usepackage{amssymb}
\usepackage{amsfonts}
\usepackage{enumerate}
\usepackage{mathrsfs}
\usepackage{graphicx}
\usepackage{epstopdf}
\usepackage{enumerate}
\usepackage{changepage}
\usepackage{bm}
\usepackage{multirow}
\usepackage{lipsum}
\usepackage{booktabs}



\newtheorem{theorem}{Theorem}
\newtheorem{definition}{Definition}

\newtheorem{lemma}{Lemma}
\newtheorem{proposition}{Proposition}
\newtheorem{conjecture}{Conjecture}
\newtheorem{example}{Example}
\newtheorem{corollary}{Corollary}

\def\bcj{\begin{conjecture}}
	\def\ecj{\end{conjecture}}
\def\bcr{\begin{corollary}}
	\def\ecr{\end{corollary}}
\def\bd{\begin{definition}}
	\def\ed{\end{definition}}
\def\bea{\begin{eqnarray}}
	\def\eea{\end{eqnarray}}
\def\bem{\begin{enumerate}}
	\def\eem{\end{enumerate}}
\def\bex{\begin{example}}
	\def\eex{\end{example}}
\def\bim{\begin{itemize}}
	\def\eim{\end{itemize}}
\def\bl{\begin{lemma}}
	\def\el{\end{lemma}}
\def\bma{\begin{bmatrix}}
	\def\ema{\end{bmatrix}}
\def\bpf{\begin{proof}}
	\def\epf{\end{proof}}
\def\bpp{\begin{proposition}}
	\def\epp{\end{proposition}}
\def\bqu{\begin{question}}
	\def\equ{\end{question}}
\def\br{\begin{remark}}
	\def\er{\end{remark}}
\def\bt{\begin{theorem}}
	\def\et{\end{theorem}}


\def\squareforqed{\hbox{\rlap{$\sqcap$}$\sqcup$}}
\def\qed{\ifmmode\squareforqed\else{\unskip\nobreak\hfil
		\penalty50\hskip1em\null\nobreak\hfil\squareforqed
		\parfillskip=0pt\finalhyphendemerits=0\endgraf}\fi}
\def\endenv{\ifmmode\;\else{\unskip\nobreak\hfil
		\penalty50\hskip1em\null\nobreak\hfil\;
		\parfillskip=0pt\finalhyphendemerits=0\endgraf}\fi}
\newenvironment{proof}{\noindent \textbf{{Proof.~} }}{\qed}
\def\Dbar{\leavevmode\lower.6ex\hbox to 0pt
	{\hskip-.23ex\accent"16\hss}D}
\makeatletter
\def\url@leostyle{%
	\@ifundefined{selectfont}{\def\UrlFont{\sf}}{\def\UrlFont{\small\ttfamily}}}
\makeatother
\urlstyle{leo}

\def\bcj{\begin{conjecture}}
	\def\ecj{\end{conjecture}}
\def\bcr{\begin{corollary}}
	\def\ecr{\end{corollary}}
\def\bd{\begin{definition}}
	\def\ed{\end{definition}}
\def\bea{\begin{eqnarray}}
	\def\eea{\end{eqnarray}}
\def\bem{\begin{enumerate}}
	\def\eem{\end{enumerate}}
\def\bex{\begin{example}}
	\def\eex{\end{example}}
\def\bim{\begin{itemize}}
	\def\eim{\end{itemize}}
\def\bl{\begin{lemma}}
	\def\el{\end{lemma}}
\def\bpf{\begin{proof}}
	\def\epf{\end{proof}}
\def\bpp{\begin{proposition}}
	\def\epp{\end{proposition}}
\def\bqu{\begin{question}}
	\def\equ{\end{question}}
\def\br{\begin{remark}}
	\def\er{\end{remark}}
\def\bt{\begin{theorem}}
	\def\et{\end{theorem}}

\def\btb{\begin{tabular}}
	\def\etb{\end{tabular}}

	\newcommand{\nc}{\newcommand}
	

	\nc{\bbA}{\mathbb{A}} \nc{\bbB}{\mathbb{B}} \nc{\bbC}{\mathbb{C}}
	\nc{\bbD}{\mathbb{D}} \nc{\bbE}{\mathbb{E}} \nc{\bbF}{\mathbb{F}}
	\nc{\bbG}{\mathbb{G}} \nc{\bbH}{\mathbb{H}} \nc{\bbI}{\mathbb{I}}
	\nc{\bbJ}{\mathbb{J}} \nc{\bbK}{\mathbb{K}} \nc{\bbL}{\mathbb{L}}
	\nc{\bbM}{\mathbb{M}} \nc{\bbN}{\mathbb{N}} \nc{\bbO}{\mathbb{O}}
	\nc{\bbP}{\mathbb{P}} \nc{\bbQ}{\mathbb{Q}} \nc{\bbR}{\mathbb{R}}
	\nc{\bbS}{\mathbb{S}} \nc{\bbT}{\mathbb{T}} \nc{\bbU}{\mathbb{U}}
	\nc{\bbV}{\mathbb{V}} \nc{\bbW}{\mathbb{W}} \nc{\bbX}{\mathbb{X}}
	\nc{\bbZ}{\mathbb{Z}}
	
	
	\nc{\bA}{{\bf A}} \nc{\bB}{{\bf B}} \nc{\bC}{{\bf C}}
	\nc{\bD}{{\bf D}} \nc{\bE}{{\bf E}} \nc{\bF}{{\bf F}}
	\nc{\bG}{{\bf G}} \nc{\bH}{{\bf H}} \nc{\bI}{{\bf I}}
	\nc{\bJ}{{\bf J}} \nc{\bK}{{\bf K}} \nc{\bL}{{\bf L}}
	\nc{\bM}{{\bf M}} \nc{\bN}{{\bf N}} \nc{\bO}{{\bf O}}
	\nc{\bP}{{\bf P}} \nc{\bQ}{{\bf Q}} \nc{\bR}{{\bf R}}
	\nc{\bS}{{\bf S}} \nc{\bT}{{\bf T}} \nc{\bU}{{\bf U}}
	\nc{\bV}{{\bf V}} \nc{\bW}{{\bf W}} \nc{\bX}{{\bf X}}
	\nc{\ba}{{\bf a}} \nc{\be}{{\bf e}} \nc{\bu}{{\bf u}}
	\nc{\brr}{{\bf r}}
	
	
	\nc{\cA}{{\cal A}} \nc{\cB}{{\cal B}} \nc{\cC}{{\cal C}}
	\nc{\cD}{{\cal D}} \nc{\cE}{{\cal E}} \nc{\cF}{{\cal F}}
	\nc{\cG}{{\cal G}} \nc{\cH}{{\cal H}} \nc{\cI}{{\cal I}}
	\nc{\cJ}{{\cal J}} \nc{\cK}{{\cal K}} \nc{\cL}{{\cal L}}
	\nc{\cM}{{\cal M}} \nc{\cN}{{\cal N}} \nc{\cO}{{\cal O}}
	\nc{\cP}{{\cal P}} \nc{\cQ}{{\cal Q}} \nc{\cR}{{\cal R}}
	\nc{\cS}{{\cal S}} \nc{\cT}{{\cal T}} \nc{\cU}{{\cal U}}
	\nc{\cV}{{\cal V}} \nc{\cW}{{\cal W}} \nc{\cX}{{\cal X}}
	\nc{\cZ}{{\cal Z}}
	
	
	\nc{\hA}{{\hat{A}}} \nc{\hB}{{\hat{B}}} \nc{\hC}{{\hat{C}}}
	\nc{\hD}{{\hat{D}}} \nc{\hE}{{\hat{E}}} \nc{\hF}{{\hat{F}}}
	\nc{\hG}{{\hat{G}}} \nc{\hH}{{\hat{H}}} \nc{\hI}{{\hat{I}}}
	\nc{\hJ}{{\hat{J}}} \nc{\hK}{{\hat{K}}} \nc{\hL}{{\hat{L}}}
	\nc{\hM}{{\hat{M}}} \nc{\hN}{{\hat{N}}} \nc{\hO}{{\hat{O}}}
	\nc{\hP}{{\hat{P}}} \nc{\hR}{{\hat{R}}} \nc{\hS}{{\hat{S}}}
	\nc{\hT}{{\hat{T}}} \nc{\hU}{{\hat{U}}} \nc{\hV}{{\hat{V}}}
	\nc{\hW}{{\hat{W}}} \nc{\hX}{{\hat{X}}} \nc{\hZ}{{\hat{Z}}}
	
	\nc{\hn}{{\hat{n}}}
	
	

	
	
	
	
	

	
	
	

	

	
	
	
	

	



	
	
	
	\def\dim{\mathop{\rm Dim}}



	
	\def\rank{\mathop{\rm rank}}
	

	
	\def\tr{\mathop{\rm Tr}}



	\newcommand{\ket}[1]{|#1\rangle}
	
	\newcommand{\ketbra}[2]{|#1\rangle\!\langle#2|}
	\newcommand{\braket}[2]{\langle#1|#2\rangle}


	


	
	
	
	\usepackage[
	colorlinks,
	linkcolor = blue,
	citecolor = blue,
	urlcolor = blue]{hyperref}
	\def \qed {\hfill \vrule height7pt width 7pt depth 0pt}
	
	\setcounter{MaxMatrixCols}{10}



	\newcounter{lastnote}

\begin{document}
		\title{Unextendible and uncompletable product bases in every bipartition}
	\author{Fei Shi}
\email[]{shifei@mail.ustc.edu.cn}
\affiliation{School of Cyber Security,
	University of Science and Technology of China, Hefei, 230026,  China}
\affiliation{Department of Computer Science,
The University of Hong Kong, Pokfulam Road, Hong Kong,  China}

\author{Mao-Sheng Li}
	\email{li.maosheng.math@gmail.com}
	\affiliation{ School of Mathematics,
		South China University of Technology, Guangzhou
		510641,  China}


\author{Xiande Zhang}
\email[]{drzhangx@ustc.edu.cn}
\affiliation{School of Mathematical Sciences,
	University of Science and Technology of China, Hefei, 230026,  China}
		
\author{Qi Zhao}
\email[]{zhaoqi@cs.hku.hk}
\affiliation{Department of Computer Science,
The University of Hong Kong, Pokfulam Road, Hong Kong,  China}


		\begin{abstract}
			
		    Unextendible product basis is an important object in quantum information theory and features a broad spectrum of applications, ranging bound entangled states, quantum nonlocality without entanglement, and Bell inequalities with no quantum violation. A generalized concept called uncompletable product basis also attracts much attention. In this paper,  we find some unextendible product bases that are uncompletable product bases in every bipartition, which answers a 19 year-old open question proposed by DiVincenzo \emph{et al.} [\href{https://link.springer.com/article/10.1007/s00220-003-0877-6}{Commun. Math. Phys. \textbf{238}, 379 (2003)}]. As a consequence, we connect such unextendible product bases to local hiding of information and give a sufficient condition for the existence of an unextendible product basis, that is still an unextendible product basis in every bipartition. Our results advance the understanding of the geometry of unextendible product bases.

		\end{abstract}

		\maketitle
		\section{Introduction}\label{sec:int}
An unextendible product basis (UPB) in a multipartite quantum system is an incomplete orthogonal
product basis whose complementary subspace contains no product state \cite{bennett1999unextendible}.  UPBs have a lot of applications in quantum information. The mixed state that is proportional to the projector on the complementary subspace of any UPB is a positive-partial-transpose (PPT) entangled state.  PPT entangled states represent
the so-called bound entangled states from which no pure
entanglement can be distilled under local operations and classical communication (LOCC)  \cite{bennett1999unextendible}. Quantum nonlocaltiy is another important application. UPBs can not be perfectly distinguished under local positive operator-valued measures (POVMs)
and classical communication \cite{bennett1999unextendible}, which shows the phenomenon of quantum nonlocality without entanglement \cite{bennett1999quantum}. For perfect discrimination of UPBs, one can use entanglement resources \cite{cohen2008understanding,zhang2020locally}. Some UPBs are locally irreducible in every biparition, and showed the phenomenon of strong quantum nonlocality without entanglement \cite{Halder2019Strong,shi2021strong,shi2022strongly}. UPBs also can be used to show more nonlocality with less purity \cite{bandyopadhyay2011more}. Bell nonlocality is from  Bell inequalities, and UPBs were connected to Bell inequalities  with no quantum violation \cite{augusiak2011bell,Augusiak2012tight}.

In 2003, DiVincenzo \emph{et al.} generalized the concept of UPBs \cite{divincenzo2003unextendible}. An uncompletable product basis (UCPB) in a multipartite quantum system is an incomplete orthogonal product basis, which can not be extended to a complete orthogonal product basis \cite{divincenzo2003unextendible}. An incomplete orthogonal product basis is a strongly uncompletable product basis (SUCPB), if it is a UCPB in any locally extended Hilbert space \cite{divincenzo2003unextendible}. Actually, the set of all UPBs is a proper subset of the set of all  SUCPBs, and the set of all SUCPBs is a proper subset of the set of all  UCPBs. See also Fig.~\ref{fig:inclusion} for the inclusion relation of these three sets.
It is known that UPBs and SUCPBs cannot be perfectly distinguished under local POVMs and classical communication, and UCPBs cannot be perfectly distinguished under local projective measurements and classical communication \cite{bennett1999unextendible,divincenzo2003unextendible}. In \cite{divincenzo2003unextendible}, DiVincenzo \emph{et al.} proposed an open question: whether there exists a UPB, which is a UCPB in every bipartition? This open question exists for 19 years because there are few constructions of UPBs in multipartite systems, and it is difficult to show UCPBs in bipartite systems.  Such UPBs can be used to understand the geometry of UPBs. There exists another famous open question for UPBs \cite{demianowicz2018unextendible}: can we find a UPB, which is still a UPB in every bipartition? Such UPBs cannot be perfectly distinguished under local POVMs and classical communication in every bipartiton \cite{bennett1999unextendible}, and can be used to construct
genuinely entangled subspaces \cite{demianowicz2018unextendible}. Recently, Demianowice showed that  such UPBs with the minimum size do not exist \cite{demianowicz2022genuinely}. However, the existence of such UPBs is still unknown.

In this work, we address the 19 year-old open question in \cite{divincenzo2003unextendible}, by presenting a UPB with a stronger property, which is an SUCPB in every bipartition. We also show that such UPBs can be used for local hiding of information. Tile structures in bipartite systems provide an efficient method for constructing bipartite UPBs \cite{divincenzo2003unextendible,shi2020unextendible}. We generalize the tile structures to multipartite systems, and give a sufficient condition for the existence a UPB that is still a UPB in every bipartition. This sufficient condition is intuitive, and one can search such UPBs through computer under this condition.

The rest of this paper is organized as follows. In Sec.~\ref{sec:pre}, we introduce the concepts of UPBs, UCPBs, and SUCPBs. Next, in Sec.~\ref{sec:sucpb}, we find a UPB that is an SUCPBs in every bipartition for arbitrary three-, and four-partite system. In Sec.~\ref{sec:gupb}, we give  a sufficient condition for the existence of a UPB that is still a UPB in every bipartition. Finally, we conclude in Sec.~\ref{sec:con}.

		\section{Preliminaries}\label{sec:pre}
		In this paper, we do not normalize product states for simplicity. We denote $\bbZ_n:=\{0,1,\ldots,n-1\}$ and $w_n:=e^{\frac{2\pi i}{n}}$. For a matrix $M$, let $\text{sum}(M)$ be the sum of all elements.  Assume $\{\ket{i}_A\}_{i\in\bbZ_m}$ and $\{\ket{j}_B\}_{j\in\bbZ_n}$ are the computational bases of $\cH_{A}$ and $\cH_{B}$, respectively. For any bipartite state $\ket{\psi}\in \cH_{A}\otimes \cH_{B}$, it can be expressed by
		\begin{equation}
		    \ket{\psi}=\sum_{i\in \bbZ_m, j\in\bbZ_n}a_{i,j}\ket{i}_A\ket{j}_B.
		\end{equation}
		Then $\ket{\psi}$ corresponds to an $m\times n$ matrix 
		\begin{equation}
		    M=(a_{i,j})_{i\in\bbZ_m,j\in\bbZ_n}.
		\end{equation}
		If $\rank(M)=1$, then  $\ket{\psi}$ is a product state; if $\rank(M)\geq 2$, then  $\ket{\psi}$ is an entangled state. Assume $\ket{\psi_i}\in \cH_{A}\otimes \cH_{B}$ corresponds to an $m\times n$ matrix $M_i$ for $i=1,2$, then $\braket{\psi_1}{\psi_2}=\tr({M_1^{\dagger}}{M_2})$.
		Let $\cH=\otimes_{i=1}^n \cH_i$ be an $n$-partite Hilbert space. 
		An orthogonal product set (OPS) in $\cH$ is a set of orthogonal product states, and an orthogonal product basis (OPB) in $\cH$ is an OPS which spans $\cH$. 
		Given $\cH=\otimes_{i=1}^n \cH_i$, let $\cH_{ext}=\otimes_{i=1}^n(\cH_i\oplus\cH_i')$ be a locally extended Hilbert space of $\cH$, where $H_i'$ is a local extension.
		Now, we review some definitions. 		
		
     \begin{definition}\label{def:ucpb}
      Let $\cS$ be an OPS in $\cH=\otimes_{i=1}^n \cH_i$.  The set $\cS$ spans a subspace $\cH_{\cS}$ in $\cH$, and $\dim(\cH_{\cS})<\dim(\cH)$.
    If the complementary subspace $\cH_{\cS}^{\bot}$ contains no product state, then $\cS$ is called an \emph{unextendible product basis (UPB)}.
    If $\cS$ cannot be extended to an OPB in $\cH$, then $\cS$ is called an \emph{uncompletable product basis (UCPB)}.
        Moreover, if $\cS$ is a UCPB in any locally extended Hilbert space $\cH_{ext}=\otimes_{i=1}^n(\cH_i\oplus\cH_i')$, then $\cS$ is called a \emph{strongly uncompletable product basis (SUCPB)}.
     \end{definition}

From Definition~\ref{def:ucpb}, a UPB or an SUCPB must be a UCPB. We can always obtain a UPB from a UCPB $\cS$, by adding some orthogonal product states to $\cS$ from $\cH_{\cS}^{\bot}$ till the new OPS is a UPB.  Moreover, UPBs and SUCPBs cannot be perfectly distinguished under local POVMs and classical communication, and UCPBs can not be perfectly distinguished under local projective measurements and classical communication \cite{bennett1999unextendible}.

For an OPS $\cS=\{\ket{\psi_i}\}_{i=1}^s$ in $\cH=\otimes_{i=1}^n \cH_i$ with $\dim(\cH)=D$ (where $s<D$), we can define a mixed state that is proportional to the projector on  $\cH_{\cS}^{\bot}$,
\begin{equation}
    \overline{\rho}_{\cS}=\frac{1}{D-s}\left(\bbI-\sum_{i=1}^s\ketbra{\psi_i}{\psi_i}\right).
\end{equation}
 Applying partial transposition map to $\overline{\rho}_{\cS}$ in any bipartition, then we can find that $(\bbI\otimes T)\overline{\rho}_{\cS}\geq 0$. It means that $\overline{\rho}_{\cS}$ has the positive partial transpose (PPT) property in any bipartition. If $\cS$ is a UPB, then $\overline{\rho}_{\cS}$ must be entangled from the definition. Thus $\overline{\rho}_{\cS}$ is a PPT entangled state, which is also a bound entangled state (no pure entanglement can be distilled) \cite{bennett1999unextendible,divincenzo2003unextendible}.
However, if $\cS$ is a UCPB or an SUCPB, $\overline{\rho}_{\cS}$ is either separable or entangled \cite{bennett1999unextendible,divincenzo2003unextendible}.

It is difficult to show that an OPS is an SUCPB from the definition. This exists a sufficient condition.

\begin{lemma}\label{lem:sucpb}
Let $\cS$ be an OPS in $\cH=\otimes_{i=1}^n \cH_i$. If all the product states in $\cH_{\cS}^{\bot}$ cannot span $\cH_{\cS}^{\bot}$, then $\cS$ is an SUCPB.
\end{lemma}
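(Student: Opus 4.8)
The plan is to argue by contradiction, reducing the completability question in the enlarged space back down to the original space $\cH$ via an orthogonal projection. First I would fix an arbitrary local extension $\cH_{ext}=\otimes_{i=1}^n(\cH_i\oplus\cH_i')$ and record the orthogonal decomposition $\cH_{ext}=\cH\oplus\cR$, where $\cR$ collects all summands of the expanded tensor product that involve at least one factor $\cH_i'$. Let $P=\otimes_{i=1}^n P_i$ be the orthogonal projection of $\cH_{ext}$ onto $\cH$, with $P_i$ the projection of $\cH_i\oplus\cH_i'$ onto $\cH_i$. The crucial observation I would establish is that $P$ sends product states of $\cH_{ext}$ to product states of $\cH$: if $\ket{\phi}=\otimes_{i=1}^n\ket{\phi_i}$ with $\ket{\phi_i}=\ket{a_i}+\ket{b_i}$, $\ket{a_i}\in\cH_i$, $\ket{b_i}\in\cH_i'$, then $P\ket{\phi}=\otimes_{i=1}^n\ket{a_i}$, which is again a product state (possibly zero).

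Now suppose, for contradiction, that $\cS$ is completable in $\cH_{ext}$, i.e. there is an OPS $\cT$ of product states such that $\cS\cup\cT$ is an OPB of $\cH_{ext}$. Since every member of $\cT$ is orthogonal to each state of $\cS\subseteq\cH$ and $\cS\cup\cT$ spans $\cH_{ext}$, the set $\cT$ spans the orthogonal complement of $\cH_{\cS}$ inside $\cH_{ext}$, which decomposes as $\cH_{\cS}^{\bot}\oplus\cR$ (the first summand being the complement taken inside $\cH$). I would then apply $P$ to each $\ket{\phi}\in\cT$. Because $P$ annihilates $\cR$ and restricts to the identity on $\cH_{\cS}^{\bot}\subseteq\cH$, the product states $\{P\ket{\phi}:\ket{\phi}\in\cT\}$ span $P(\cH_{\cS}^{\bot}\oplus\cR)=\cH_{\cS}^{\bot}$. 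Moreover each $P\ket{\phi}$ already lies in $\cH_{\cS}^{\bot}$, since for every $\ket{\psi}\in\cS$ self-adjointness of $P$ gives $\braket{\psi}{P\phi}=\braket{\psi}{\phi}=0$. Discarding the zero vectors, this exhibits a family of product states contained in $\cH_{\cS}^{\bot}$ that spans $\cH_{\cS}^{\bot}$, contradicting the hypothesis. Hence no such $\cT$ exists in any $\cH_{ext}$, so $\cS$ is an SUCPB.

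I expect the only genuinely delicate point to be the bookkeeping of the orthogonal decompositions: verifying carefully that the complement of $\cH_{\cS}$ inside $\cH_{ext}$ is exactly $\cH_{\cS}^{\bot}\oplus\cR$, and that $P$ is the identity on $\cH_{\cS}^{\bot}$ and zero on $\cR$, so that projecting a spanning set of the complement again yields a spanning set of $\cH_{\cS}^{\bot}$. The fact that $P$ preserves the product structure is immediate from its tensor form $P=\otimes_i P_i$, so the substance of the argument rests entirely on this projection step rather than on any explicit computation.
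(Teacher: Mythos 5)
Your proof is correct, and it takes a genuinely different route from the paper's. The paper disposes of the lemma in two lines of citations: if the product states in $\cH_{\cS}^{\bot}$ do not span $\cH_{\cS}^{\bot}$, then the complementary state $\overline{\rho}_{\cS}$ is entangled by the range criterion (Theorem 2(ii) of \cite{horodecki1997separability}, since a separable state's range is spanned by product vectors), and then Proposition 1 of \cite{divincenzo2003unextendible} is invoked to conclude that an OPS with entangled complementary state is an SUCPB. You instead prove the statement from scratch: assuming a completion $\cT$ of $\cS$ in some $\cH_{ext}=\otimes_{i=1}^n(\cH_i\oplus\cH_i')$, you use the local projection $P=\otimes_{i=1}^n P_i$, which sends product states to product states, is the identity on $\cH_{\cS}^{\bot}$, and annihilates the complement $\cR$ of $\cH$ in $\cH_{ext}$, to project $\cT$ onto a set of product states spanning $\cH_{\cS}^{\bot}$, contradicting the hypothesis; your bookkeeping of the decomposition $\cH_{ext}=\cH_{\cS}\oplus\cH_{\cS}^{\bot}\oplus\cR$ and the spanning claim $P(\lin\cT)=\lin(P\cT)=\cH_{\cS}^{\bot}$ is sound. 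In effect you have inlined and re-proved the relevant direction of the cited Proposition 1 (the projection trick is precisely the mechanism behind it), composed with the contrapositive of the range criterion, while bypassing density matrices and separability entirely. The paper's route buys brevity and situates the lemma in known results, with the useful byproduct that $\overline{\rho}_{\cS}$ is then a PPT entangled (bound entangled) state; your route buys a self-contained elementary argument that makes transparent exactly why no local extension can help: any completing set, projected back to $\cH$, would exhibit the forbidden spanning family of product states.
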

\begin{proof}
If all the product states in $\cH_{\cS}^{\bot}$ cannot span $\cH_{\cS}^{\bot}$, then $\overline{\rho}_{\cS}$ must be entangled by Theorem 2(ii) in \cite{horodecki1997separability}. Further, according to Proposition 1 in \cite{divincenzo2003unextendible}, $\cS$ is an SUCPB.
\end{proof}
\vspace{0.4cm}

	\begin{figure}[t]
		\centering
		\includegraphics[scale=0.5]{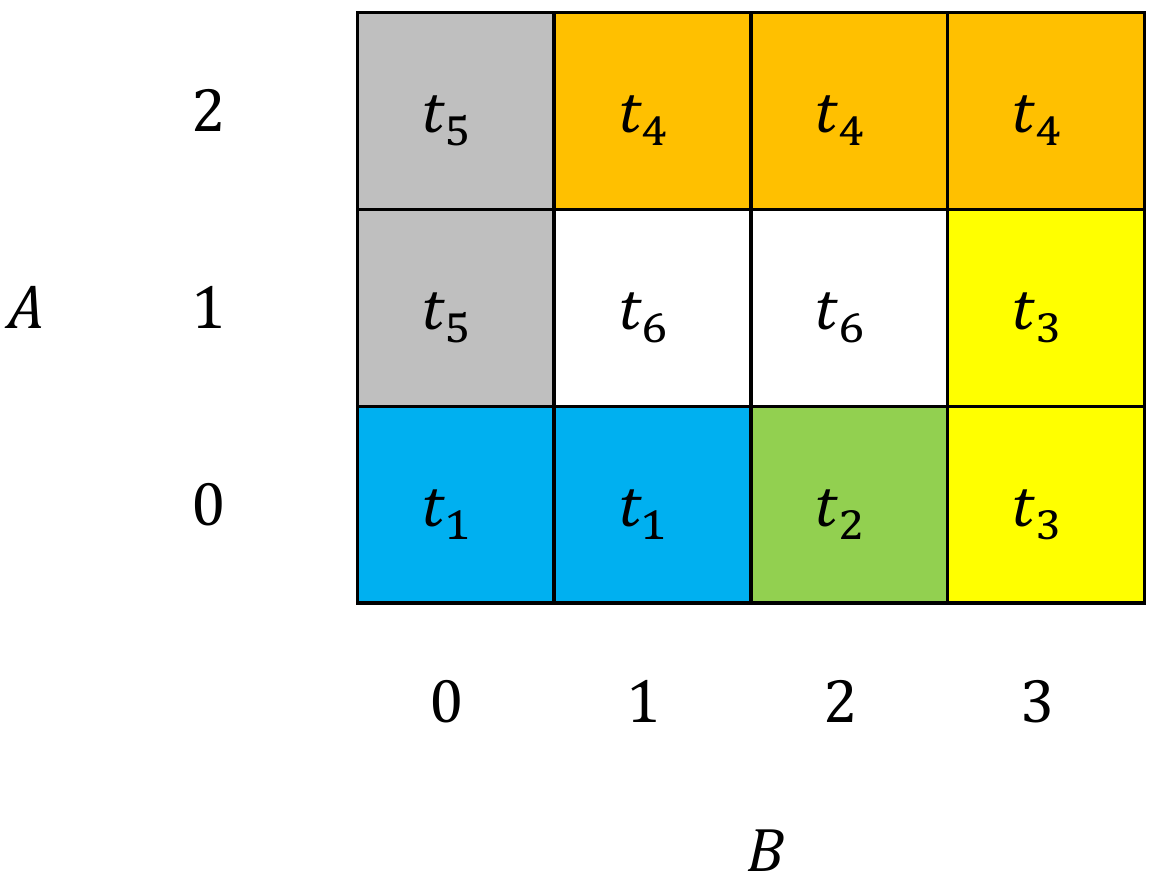}
		\caption{Tile structure with $6$ tiles in $\bbC^3\otimes \bbC^4$.   } \label{fig:34}
	\end{figure}

By Lemma~\ref{lem:sucpb}, a UPB must be an SUCPB. However, the converse is not true. We will give an example of SUCPB, which is not a UPB.

Tile structures can be used to construct UPBs \cite{divincenzo2003unextendible,halder2019family,shi2020unextendible}. Next, we show that tile structures can also be used to construct SUCPBs.  A tile structure $\cT$ in $\bbC^m\otimes \bbC^n$ is an $m\times n$ rectangle, which can be partitioned into $s$ disjoint tiles $\{t_i\}_{i=1}^s$. Each tile $t_i$ is a rectangle. We denote $\cT:=\cup_{i=1}^s t_i$. For example, Fig.~\ref{fig:34} gives a tile structure $\cT=\cup_{i=1}^6 t_i$ with  $6$ tiles in $\bbC^3\otimes \bbC^4$. Any tile $t_i$ of $\cT=\cup_{i=1}^s t_i$  has row coordinates $\{p_0,p_1,\ldots,p_{k-1}\}_A$ and column coordinates $\{q_0,q_1,\ldots,q_{\ell-1}\}_B$, and we denote it as  $t_i=\{p_0,p_1,\ldots,p_{k-1}\}_A\times \{q_0,q_1,\ldots,q_{\ell-1}\}_B$, where $\{p_0,p_1,\ldots,p_{k-1}\}$ and $\{q_0,q_1,\ldots,q_{\ell-1}\}$ are subsets of $\bbZ_m$ and $\bbZ_n$, respectively. For tile $t_i$, we can construct an OPS of size $k\ell$ in $\bbC^m\otimes \bbC^n$,
\begin{equation}\label{eq:A_i}
\begin{aligned}
    \cA_i=\{&\ket{\psi_i{(a,b)}}:=\left(\sum_{e\in\bbZ_k}m_{a,e}\ket{p_e}\right)_A\left(\sum_{e\in\bbZ_\ell}n_{b,e}\ket{q_e}\right)_B\\
    &\mid (a,b)\in\bbZ_k\times\bbZ_\ell\}.
\end{aligned}
\end{equation}
Here the coefficient matrix $M=(m_{a,e})_{a,e\in\bbZ_k}$ is a  $k\times k$ row orthogonal matrix (row vectors are mutually orthogonal), and $m_{0,e}=1$ for $e\in \bbZ_k$, and the coefficient matrix $N=(n_{b,e})_{b,e\in\bbZ_\ell}$ is an  $\ell\times \ell$ row orthogonal matrix, and $n_{0,e}=1$ for $e\in \bbZ_\ell$. For example, we can choose $M=(w_k^{ae})_{a,e\in\bbZ_k}$, and $N=(w_\ell^{be})_{b,e\in\bbZ_\ell}$.
 Since those tiles in $\cT=\cup_{i=1}^s t_i$ are disjoint, we can obtain an OPB 
 \begin{equation}
    \cB:=\cup_{i=1}^s\cA_i  
 \end{equation}
 in $\bbC^m\otimes \bbC^n$. Further, we define the ``stopper" state as
\begin{equation}\label{eq:mn_S}
    \ket{S}=\left(\sum_{i\in\bbZ_m}\ket{i}\right)_A\left(\sum_{j\in\bbZ_n}\ket{j}\right)_B.
\end{equation}
We mainly consider the following OPS,
\begin{equation}\label{eq:St}
   \cS:=\cup_{i=1}^s(\cA_{i}\setminus\{\ket{\psi_i{(0,0)}}\})\cup\{\ket{S}\}. 
\end{equation}

For example, using the tile structure $\cT=\cup_{i=1}^6 t_i$ in  Fig.~\ref{fig:34}, we  obtain an OPB $\cB=\cup_{i=1}^6\cA_i$ in $\bbC^3\otimes \bbC^4$, where
\begin{equation}\label{eq:34}
\begin{aligned}
    \cA_1&=\{\ket{\psi_1{(0,b)}}=\ket{0}_A(\ket{0}+(-1)^{b}\ket{1})_B\mid b\in\bbZ_2\},\\
    \cA_2&=\{\ket{\psi_2{(0,0)}}=\ket{0}_A\ket{2}_B\},\\
    \cA_3&=\{\ket{\psi_3{(a,0)}}=(\ket{0}+(-1)^{a}\ket{1})_A\ket{3}_B\mid a\in\bbZ_2\},\\
    \cA_4&=\{\ket{\psi_4{(0,b)}}=\ket{2}_A(\ket{1}+w_3^{b}\ket{2}+w_3^{2b}\ket{3})_B\mid b\in\bbZ_3\},\\
    \cA_5&=\{\ket{\psi_5{(a,0)}}=(\ket{1}+(-1)^{a}\ket{2})_A\ket{0}_B\mid a\in\bbZ_2\},\\
    \cA_6&=\{\ket{\psi_6{(0,b)}}=\ket{1}_A(\ket{1}+(-1)^{b}\ket{2})_B\mid b\in\bbZ_2\}.
    \end{aligned}
\end{equation}
The ``stopper" state is
\begin{equation}\label{eq:34_S}
    \ket{S}=(\ket{0}+\ket{1}+\ket{2})_A(\ket{0}+\ket{1}+\ket{2}+\ket{3})_B.
\end{equation}
Next, we show that
\begin{equation}\label{eq:S34}
   \cS:=\cup_{i=1}^6(\cA_{i}\setminus\{\ket{\psi_i{(0,0)}}\})\cup\{\ket{S}\} 
\end{equation}
is an SUCPB in $\bbC^3\otimes \bbC^4$.
\begin{example}\label{example:34}
In $\bbC^3\otimes \bbC^4$, the OPS $\cS$ given by Eq.~\eqref{eq:S34} is an SUCPB.
\end{example}
\begin{proof}
Let $\cS_1:=\cup_{i=1}^6(\cA_{i}\setminus\{\ket{\psi_i{(0,0)}}\})$ and $\cS_2:=\cup_{i=1}^6\ket{\psi_i{(0,0)}}$. We know that $\cS_1\cup \cS_2$ is an OPB in $\bbC^3\otimes \bbC^4$. Since $\cH_{\cS_1}\subset \cH_{\cS}$, it implies $\cH_{\cS}^{\bot}\subset \cH_{\cS_1}^{\bot}=\cH_{\cS_2}$. Then for any product state $\ket{\psi}\in \cH_{\cS}^{\bot}$, there exists $a_i\in \bbC$ for $1\leq i\leq 6$, such that
\begin{equation*}
    \ket{\psi}=\sum_{i=1}^6a_i\ket{\psi_i{(0,0)}}.
\end{equation*}
Next, $\ket{\psi}$ corresponds to a $3\times 4$ matrix,
\begin{equation*}
    M=\begin{pmatrix}
    a_5 & a_4 & a_4 & a_4\\
    a_5 & a_6 & a_6 & a_3\\
    a_1 & a_1 & a_2 & a_3\\
    \end{pmatrix}.
\end{equation*}
Note that $M$ has a similar structure to the tile structure in Fig.~\ref{fig:34}.
The ``stopper" state $\ket{S}$ corresponds to a all-ones matrix $J$, where every element is equal to one.
Since $\ket{\psi}$ is a product state and $\braket{S}{\psi}=0$, we have $\rank(M)=1$ and $\text{sum}(M)=0$. This is only possible for
\begin{equation*}
    M=\begin{pmatrix}
    0 & 0 & 0 & 0\\
    0 & 0 & 0 & 0\\
    a_1 & a_1 & a_2 &0\\
    \end{pmatrix}, \quad 2a_{1}+a_2=0.
\end{equation*}
It means that $\cH_{\cS}^{\bot}$ contains only one product state $\ket{0}(\ket{0}+\ket{1}-2
\ket{2})$. Since $\dim(\cH_{\cS}^{\bot})=5$, the OPS $\cS$ is an SUCPB by Lemma~\ref{lem:sucpb}.
\end{proof}
\vspace{0.4cm}

Since there exists a  product state $\ket{\psi}=\ket{0}(\ket{0}+\ket{1}-2
\ket{2})\in \cH_{\cS}^{\bot}$, $\cS$ is not a UPB.  However, if we add $\ket{\psi}$  to $\cS$, then $\cS'=\cS\cup\{\ket{\psi}\}$ must be a UPB in $\bbC^3\otimes \bbC^4$. In fact, for any  product state $\ket{\phi}\in \cH_{\cS'}^{\bot}$,  $\ket{\phi}$ corresponds to a $3\times 4$ matrix,
\begin{equation*}
    M=\begin{pmatrix}
    a_5 & a_4 & a_4 & a_4\\
    a_5 & a_6 & a_6 & a_3\\
    a_1 & a_1 & a_2 & a_3\\
    \end{pmatrix},
\end{equation*}
where $2a_1=a_2$, $\rank(M)=1$ and $\text{sum}(M)=0$. Such a matrix $M$ does not exist.
From the above discussion,  we can obtain a sufficient condition for the construction of UPBs by tile structures.
\begin{lemma}\label{lem:upb}
For a tile structure $\cT=\cup_{i=1}^st_i$ ($s\geq 5$) in $\bbC^m\otimes \bbC^n$, if any $r$ ($2\leq r\leq s-1$) tiles cannot form a rectangle, then the OPS $\cS$ given by Eq.~\eqref{eq:St} is a UPB in $\bbC^m\otimes \bbC^n$.
\end{lemma}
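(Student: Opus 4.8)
The plan is to follow the template of the proof of Example~\ref{example:34}: translate ``$\cH_{\cS}^{\bot}$ contains a product state'' into a purely matrix-combinatorial condition, and then let the no-rectangle hypothesis carry out the argument. First I would reuse the splitting $\cS_1:=\cup_{i=1}^s(\cA_i\setminus\{\ket{\psi_i(0,0)}\})$ and $\cS_2:=\cup_{i=1}^s\{\ket{\psi_i(0,0)}\}$, so that $\cS_1\cup\cS_2=\cB$ is an OPB and $\cS=\cS_1\cup\{\ket S\}$. Since $\cS_1\subset\cS$, any product state $\ket\psi\in\cH_{\cS}^{\bot}$ lies in $\cH_{\cS_1}^{\bot}=\lin\,\cS_2$, hence $\ket\psi=\sum_{i=1}^s a_i\ket{\psi_i(0,0)}$. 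Because $m_{0,e}=n_{0,e}=1$ in Eq.~\eqref{eq:A_i}, the matrix of $\ket{\psi_i(0,0)}$ is the $\{0,1\}$-indicator of the tile $t_i$; as the tiles are disjoint, the matrix $M$ of $\ket\psi$ is \emph{constant equal to $a_i$ on each tile $t_i$}. The product condition forces $\rank(M)=1$ with $M\neq 0$, and orthogonality to $\ket S$ (whose matrix is the all-ones $J$) gives $\text{sum}(M)=\tr(J^{\dagger}M)=0$.

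Next I would examine the support of $M$. A rank-one matrix satisfies $M_{pq}=a_pb_q$ for suitable scalars, so its support is the combinatorial rectangle $\{p:a_p\neq 0\}\times\{q:b_q\neq 0\}$. Since $M$ equals the nonzero constant $a_i$ on each nonzero tile and vanishes elsewhere, this rectangle is exactly the union of the nonzero tiles. Thus the nonzero tiles form a rectangle, so by the hypothesis their number $r$ must be $1$ or $s$ (the range $2\leq r\leq s-1$ is excluded, and $r\geq 1$ because $M\neq 0$). If $r=1$, then $M=a_i$ on a single tile $t_i$ and $0$ elsewhere, giving $\text{sum}(M)=a_i\,|t_i|\neq 0$, contradicting $\text{sum}(M)=0$.

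The remaining case $r=s$ is the one I expect to be the main obstacle. Here all tiles are nonzero, so every $a_p$ and every $b_q$ is nonzero. Constancy of $M$ on $t_i=P_i\times Q_i$ forces $a_p$ to be constant on $P_i$ and $b_q$ constant on $Q_i$ (divide two equal tile values sharing a column, resp.\ row). I would then partition the rows into classes according to the value of $a_p$ and the columns into classes according to the value of $b_q$. Each tile then lies inside a single block ``(row-class)$\times$(column-class)''; conversely any tile meeting such a block $X\times Y$ is contained in it, so each block is a rectangle that is a union of whole tiles. Applying the hypothesis to each block shows that every block is either a single tile or all $s$ tiles.

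Finally I would rule out a genuinely nonconstant $M$ by a short sub-case check. If there were at least two row-classes or two column-classes, there would be at least two disjoint blocks, each therefore a single tile, so $s$ equals the number of blocks; but then one full row-band of blocks is a union of $2\leq r\leq s-1$ tiles forming a rectangle (using $s\geq 5$ to keep $r$ in range), contradicting the hypothesis. Hence there is exactly one row-class and one column-class, i.e.\ all $a_p$ are equal and all $b_q$ are equal, so $M=cJ$ for some $c\neq 0$; but then $\text{sum}(M)=c\,mn\neq 0$, once more contradicting $\text{sum}(M)=0$. Since every case is impossible, $\cH_{\cS}^{\bot}$ contains no product state, and the OPS $\cS$ of Eq.~\eqref{eq:St} is a UPB. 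The only delicate point is this last $r=s$ analysis: the reduction and the $r=1$ case are immediate, whereas forcing $M=cJ$ in the full-grid case is where the no-rectangle hypothesis and the bound $s\geq 5$ are really used.
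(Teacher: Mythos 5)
Your overall route is exactly the one the paper intends: the paper gives no separate proof of Lemma~\ref{lem:upb}, deriving it ``from the above discussion'' surrounding Example~\ref{example:34} (and the U-tile structure of Ref.~\cite{shi2020unextendible}); namely, any product state in $\cH_{\cS}^{\bot}$ has a matrix $M$ that is constant on each tile with $\rank(M)=1$ and $\text{sum}(M)=0$, the support of a rank-one matrix is a combinatorial rectangle, so the no-rectangle hypothesis forces the number $r$ of nonzero tiles into $\{1,s\}$. Your reduction to $\lin\cS_2$, the $r=1$ case, and the block analysis in the $r=s$ case (the values $a_p$ constant on each $P_i$ and $b_q$ on each $Q_i$, each block a rectangle tiled by whole tiles, hence a single tile whenever there are at least two blocks) are all correct, and indeed more detailed than anything the paper writes down.

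One step, however, fails as literally stated: in the subcase with at least two blocks you conclude via ``one full row-band of blocks is a union of $2\leq r\leq s-1$ tiles forming a rectangle.'' This is false in the degenerate situations where there is only one column class ($v=1$: a row-band is then a single block, i.e.\ a single tile, so $r=1$) or only one row class ($u=1$: the row-band is the whole grid, so $r=s$), and these situations are not excluded by your case hypothesis ``at least two row-classes or two column-classes.'' The repair is one line and stays inside your framework: since the number of blocks is $uv=s\geq 2$, one of $u,v$ is at least $2$; say $v\geq 2$. Then the union of two blocks in a common row class, $X_\alpha\times Y_1$ and $X_\alpha\times Y_2$, is the rectangle $X_\alpha\times (Y_1\cup Y_2)$ consisting of exactly $2\leq s-1$ tiles (only $s\geq 3$ is needed), contradicting the hypothesis; symmetrically use two blocks in a common column class when $u\geq 2$. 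With that patch, together with your one-block case $M=cJ$ and $\text{sum}(M)=cmn\neq 0$, the proof is complete. Note in passing that $s\geq 5$ is not what keeps $r$ in range here --- $s\geq 3$ suffices throughout your argument.
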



The tile structure in Lemma~\ref{lem:upb} is the U-tile structure proposed in Ref.~\cite{shi2020unextendible}. In Ref.~\cite{bennett1999unextendible}, the authors gave a UCPB, which is not an SUCPB. Let $\cD(\text{UPB})$ be the set of all UPBs; $\cD(\text{SUCPB})$ be the set of all SUCPBs; and $\cD(\text{UCPB})$ be the set of all UCPBs. Then following set inclusion relation is obtained,
\begin{equation*}
    \cD(\text{UPB})\subsetneq  \cD(\text{SUCPB})\subsetneq \cD(\text{UCPB}),
\end{equation*}
See also Fig.~\ref{fig:inclusion} for the inclusion relation of these three sets.

	\begin{figure}[t]
		\centering
		\includegraphics[scale=0.35]{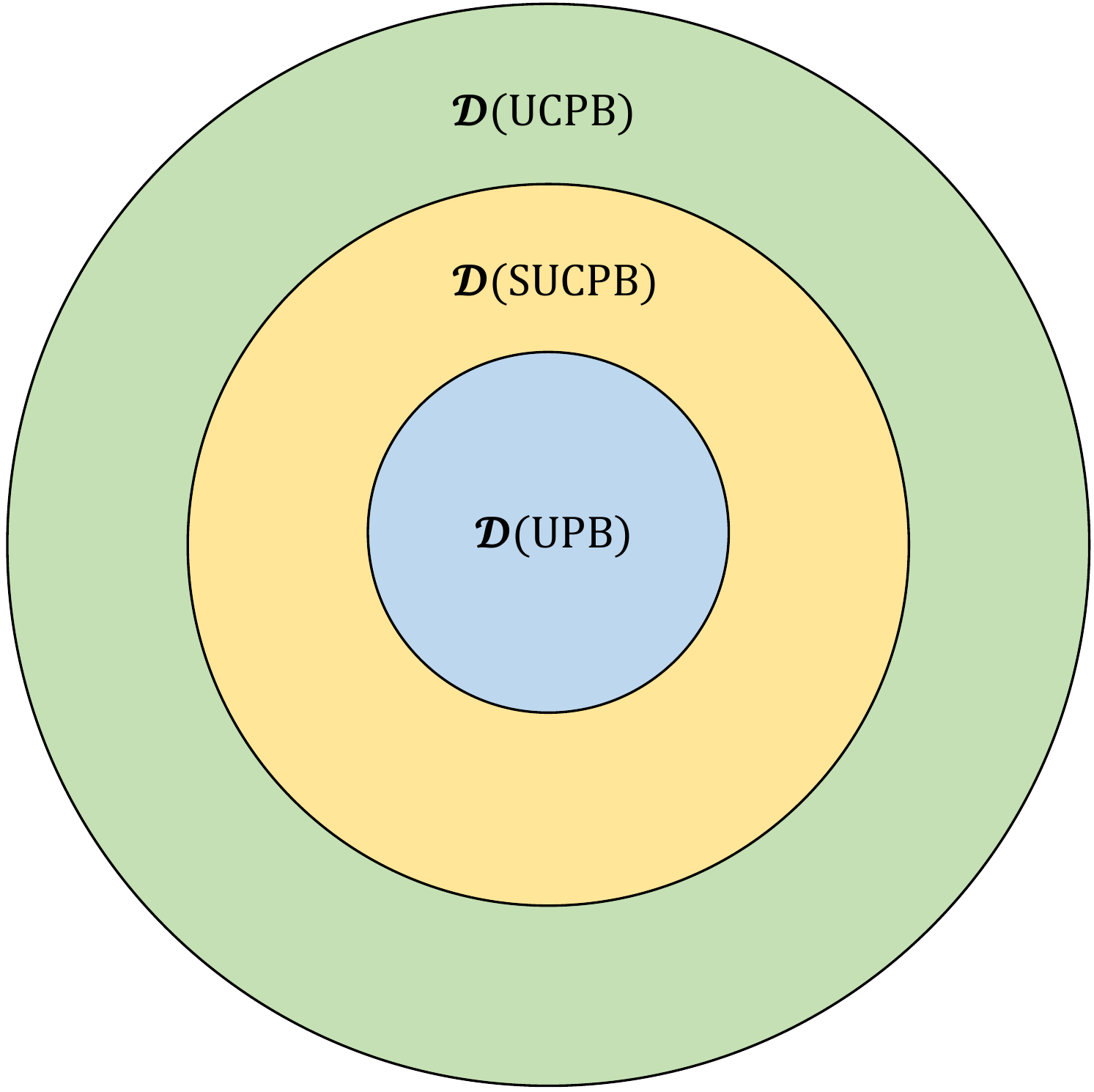}
		\caption{A set inclusion relation among the set of all UPBs  $\cD(\text{UPB})$,  the set of all SUCPBs  $\cD(\text{SUCPB})$, and the set of all UCPBs $\cD(\text{UCPB})$. The set $\cD(\text{UPB})$ is a proper subset of $\cD(\text{SUCPB})$, and $\cD(\text{SUCPB})$ is a proper subset of $\cD(\text{UCPB})$.   } \label{fig:inclusion}
	\end{figure}

In Ref.~\cite{divincenzo2003unextendible}, the authors proposed an open question: can we find a UPB which is a UCPB in every bipartition? We will give a positive answer, by showing a stronger UPB, which is an SUCPB in every bipartition.

\section{The existence of a UPB that is an SUCPB in every bipartition}\label{sec:sucpb}
In this section, we show that there exists a UPB which is an SUCPB in every bipartition in any three, and four-partite system. Since any OPS in $\bbC^2 \otimes \bbC^n$ can be extended to an OPB \cite{bennett1999unextendible,divincenzo2003unextendible}, the minimum system for the existence of such UPBs is $\bbC^3\otimes \bbC^3\otimes \bbC^3$. The following UPB in $\bbC^3\otimes \bbC^3\otimes \bbC^3$ is from \cite{agrawal2019genuinely}, which is constructed from the tile structure in tripartite system (we will introduce tile structures in multipartite systems in Sec.~\ref{sec:gupb}).
Consider an OPB $\cup_{i=1}^9\cA_i$ in  $\bbC^3\otimes \bbC^3\otimes \bbC^3$,
\begin{equation}\label{eq:opb333}
\begin{aligned}
    \cA_1:=&\{\ket{\psi_1(i,j)}=\ket{\xi_i}_A\ket{0}_B\ket{\eta_k}_C\mid (i,j)\in\bbZ_2\times \bbZ_2\},\\
    \cA_2:=&\{\ket{\psi_2(i,j)}=\ket{\xi_i}_A\ket{\eta_j}_B\ket{2}_C\mid (i,j)\in\bbZ_2\times \bbZ_2\},\\ \cA_3:=&\{\ket{\psi_3(i,j)}=\ket{2}_A\ket{\xi_i}_B\ket{\eta_j}_C\mid (i,j)\in\bbZ_2\times \bbZ_2\},\\
    \cA_4:=&\{\ket{\psi_4(i,j)}=\ket{\eta_i}_A\ket{2}_B\ket{\xi_j}_C\mid (i,j)\in\bbZ_2\times \bbZ_2\},\\
    \cA_5:=&\{\ket{\psi_5(i,j)}=\ket{\eta_i}_A\ket{\xi_j}_B\ket{0}_C\mid (i,j)\in\bbZ_2\times \bbZ_2\},\\ \cA_6:=&\{\ket{\psi_6(i,j)}=\ket{0}_A\ket{\eta_i}_B\ket{\xi_j}_C\mid (i,j)\in\bbZ_2\times \bbZ_2\},\\
    \cA_7:=&\{\ket{0}_A\ket{0}_B\ket{0}_C\},\\
    \cA_8:=&\{\ket{1}_A\ket{1}_B\ket{1}_C\},\\
    \cA_9:=&\{\ket{2}_A\ket{2}_B\ket{2}_C\},
\end{aligned}
\end{equation}
where $\ket{\eta_s}_X=\ket{0}_X+(-1)^s\ket{1}_X$, $\ket{\xi_s}_X=\ket{1}_X+(-1)^s\ket{2}_X$ for $s\in\bbZ_2$, and $X\in\{A,B,C\}$. The  ``stopper" state is,
\begin{equation}\label{eq:333_S}
    \ket{S}=(\ket{0}+\ket{1}+\ket{2})_A(\ket{0}+\ket{1}+\ket{2})_B(\ket{0}+\ket{1}+\ket{2})_C.
\end{equation}
Then 
\begin{equation}\label{eq:U333}
  \cU:=\cup_{i=1}^6(\cA_i \setminus\{\ket{\psi_i(0,0)}\})\cup\ket{S}  
\end{equation}
is a UPB in $\bbC^3\otimes \bbC^3\otimes \bbC^3$  \cite{agrawal2019genuinely}. Now, we have the following lemma.

	\begin{figure}[t]
		\centering
		\includegraphics[scale=0.4]{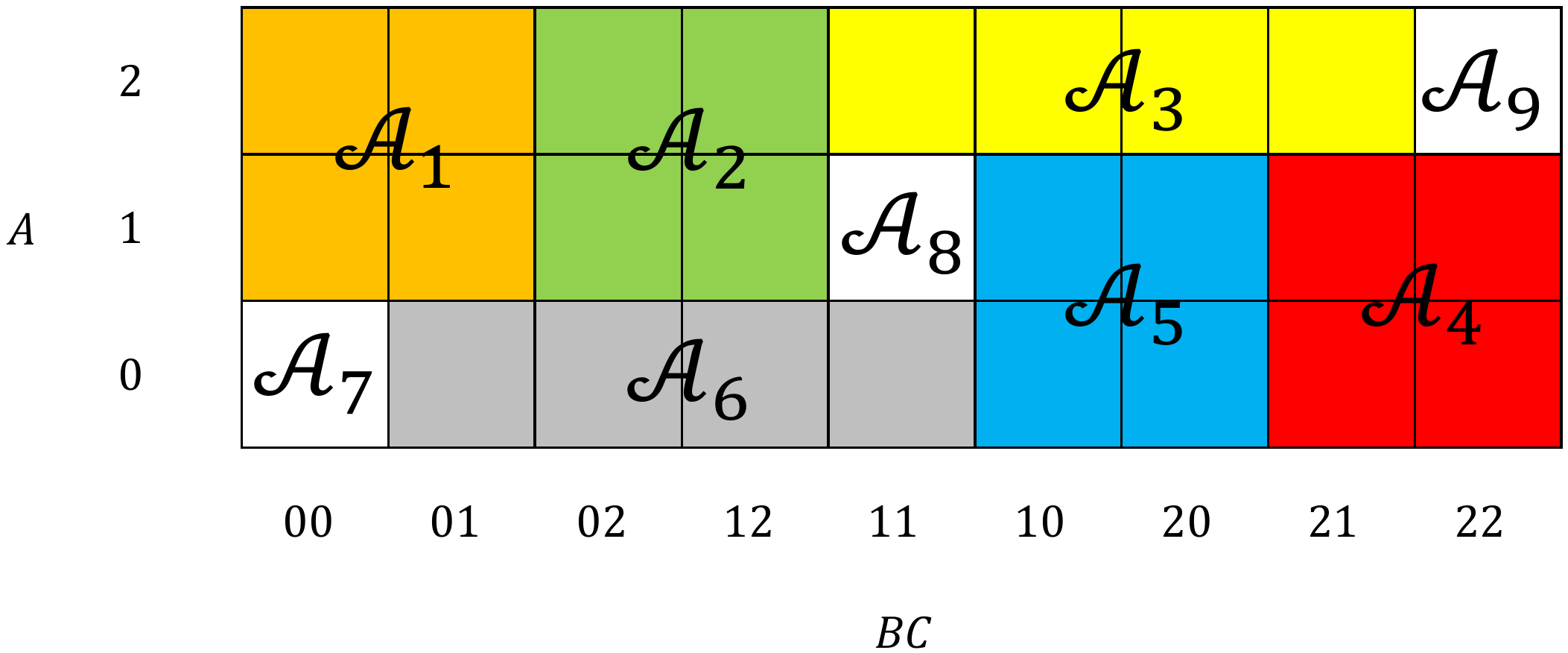}
		\caption{The corresponding tile structure in $\bbC^3\otimes \bbC^9$ of the OPB $\cup_{i=1}^9\cA_i$ (Eq.~\eqref{eq:opb333})   in $A|BC$ bipartition.  } \label{fig:tile39}
	\end{figure}

\begin{lemma}\label{lem:333}
In $\bbC^3\otimes \bbC^3\otimes \bbC^3$, the UPB  $\cU$ given by Eq.~\eqref{eq:U333} is an SUCPB in every bipartition.
\end{lemma}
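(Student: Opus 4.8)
The plan is to establish the SUCPB property separately in each of the three bipartitions $A|BC$, $B|CA$, $C|AB$, and to cut the work down to a single cut by symmetry. First I would record that the cyclic relabelling $\sigma:A\to B\to C\to A$ permutes the blocks of $\cU$ among themselves: it sends $\cA_1\to\cA_5\to\cA_6\to\cA_1$ and $\cA_2\to\cA_3\to\cA_4\to\cA_2$, fixes $\cA_7,\cA_8,\cA_9$, and fixes the stopper $\ket{S}$ of Eq.~\eqref{eq:333_S}. Hence $\sigma$ maps $\cU$ to itself while cyclically permuting the three bipartitions, so it suffices to treat the cut $A|BC$.

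For the cut $A|BC$ I would regard $BC$ as a single $\bbC^9$ system, so that $\cU$ becomes a bipartite OPS in $\bbC^3\otimes\bbC^9$ whose tile picture is Fig.~\ref{fig:tile39}, and invoke the sufficient condition of Lemma~\ref{lem:sucpb}: it is enough to show that the $A|BC$-product states contained in $\cH_{\cU}^{\bot}$ fail to span $\cH_{\cU}^{\bot}$. Writing $\cS_1:=\cup_{i=1}^6(\cA_i\setminus\{\ket{\psi_i(0,0)}\})$, the set $\cup_{i=1}^9\cA_i$ is an OPB, so $\cH_{\cS_1}^{\bot}$ is spanned by the nine states $\{\ket{\psi_i(0,0)}\}_{i=1}^6\cup\{\cA_7,\cA_8,\cA_9\}$; since $\cH_{\cU}^{\bot}\subseteq\cH_{\cS_1}^{\bot}$ and $\cU$ also contains $\ket{S}$, every $\ket{\psi}\in\cH_{\cU}^{\bot}$ has the form $\ket{\psi}=\sum_{i=1}^6 a_i\ket{\psi_i(0,0)}+a_7\ket{000}+a_8\ket{111}+a_9\ket{222}$ with $\braket{S}{\psi}=0$. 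Collecting the amplitudes into a $3\times 9$ matrix $M=M(a_1,\dots,a_9)$ in the $A|BC$ basis, whose entry pattern is exactly the tiles of Fig.~\ref{fig:tile39}, the two relevant conditions read $\rank(M)=1$ (product across $A|BC$) and $\text{sum}(M)=0$ (orthogonality to $\ket{S}$), and $\cH_{\cU}^{\bot}$ is precisely the $8$-dimensional space cut out by $\text{sum}(M)=0$.

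The core of the argument is then the enumeration of the rank-one matrices $M$ of this fixed tile shape with vanishing entry sum. Writing $M=\mathbf{u}\,\mathbf{v}^{\top}$, the repeated-entry structure of the tiles (for instance, two rows sharing the same value throughout a column force the corresponding components of $\mathbf{u}$ to coincide whenever that value is nonzero) pins $\mathbf{u}$ down to one of the directions $\ket{0}_A,\ket{1}_A,\ket{2}_A,(\ket{0}+\ket{1})_A,(\ket{1}+\ket{2})_A$, and a short check of which $a_i$ may survive in each case shows that, up to a scalar, there are exactly four such product states. Since four vectors cannot span the $8$-dimensional space $\cH_{\cU}^{\bot}$, Lemma~\ref{lem:sucpb} yields that $\cU$ is an SUCPB across $A|BC$, and the symmetry $\sigma$ transports this to the remaining two bipartitions.

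The main obstacle is the rank-one enumeration above: although finite and essentially mechanical, one must track carefully, for each admissible $\mathbf{u}$, exactly which amplitudes $a_i$ are forced to vanish, so as neither to miss nor to double-count the product states, and then verify that imposing $\text{sum}(M)=0$ removes the all-ones (trivial) solution. Everything else—the symmetry reduction, the identification of $\cH_{\cU}^{\bot}$, and the final count $4<8$—is routine.
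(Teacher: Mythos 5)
Your proposal is correct and takes essentially the same route as the paper: reduce to the single cut $A|BC$ (the paper appeals to the fact that every bipartition yields a similar tile structure, where you make the cyclic relabelling symmetry explicit), express a putative product state of $\cH_{\cU}^{\bot}$ as a $3\times 9$ tile-patterned matrix $M$ subject to $\rank(M)=1$ and $\mathrm{sum}(M)=0$, enumerate exactly four surviving product states, and conclude via Lemma~\ref{lem:sucpb} since four vectors cannot span the $8$-dimensional complement. The only cosmetic difference is that you organize the paper's four-case analysis as a rank-one factorization $M=\mathbf{u}\,\mathbf{v}^{\top}$ with the row vector $\mathbf{u}$ pinned to finitely many directions, which is a clean way to present the same enumeration.
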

\begin{proof}
First, we consider the bipartition $A|BC$. The OPB $\cup_{i=1}^9\cA_i$ given by Eq.~\eqref{eq:opb333}  in $A|BC$ bipartition corresponds the tile structure in Fig.~\ref{fig:tile39}. Next, we show that the OPS $\cU_{A|BC}$ is an SUCPB in $\bbC^3\otimes \bbC^9$. For the same discussion as Example~\ref{example:34}, we can assume that $\ket{\psi}\in\cH_{\cU_{A|BC}}^{\bot}$ is a product state. By Fig.~\ref{fig:tile39}, $\ket{\psi}$ corresponds to a $3\times 9$ matrix
\begin{equation*}
   M= \begin{pmatrix}
  a_1 &a_1 &a_2 &a_2 &a_3 &a_3 &a_3 &a_3 &a_9\\
  a_1 &a_1 &a_2 &a_2 &a_8 &a_5 &a_5 &a_4 &a_4\\
  a_7 &a_6 &a_6 &a_6 &a_6 &a_5 &a_5 &a_4 &a_4\\
    \end{pmatrix},
\end{equation*}
where $a_i\in\bbC$ for $1\leq i\leq 9$, $\rank(M)=1$, and $\text{sum}(M)=0$. There are only four cases,
\begin{enumerate}[(i)]
    \item $a_1+a_2=0$,  $a_i=0$ for $1\leq i\leq 9$ and $i\neq 1,2$;
    \item $4a_3+a_9=0$, $a_i=0$ for $1\leq i\leq 9$ and $i\neq 3,9$;
    \item $a_4+a_5=0$,  $a_i=0$ for $1\leq i\leq 9$ and $i\neq 4,5$;
    \item $4a_6+a_7=0$, $a_i=0$ for $1\leq i\leq 9$ and $i\neq 6,7$.
\end{enumerate}
It means that there are only four product states in $\cH_{\cU_{A|BC}}^{\bot}$: $(\ket{1}+\ket{2})_A(\ket{00}+\ket{01}-\ket{02}-\ket{12})_{BC}$,
$\ket{2}_A(\ket{11}+\ket{10}+\ket{20}+\ket{21}-4\ket{22})_{BC}$,
$(\ket{0}+\ket{1})_A(\ket{10}+\ket{20}-\ket{21}-\ket{22})_{BC}$,
and $\ket{0}_A(\ket{01}+\ket{02}+\ket{12}+\ket{11}-4\ket{00})_{BC}$.
Since $\dim(\cH_{\cU_{A|BC}}^{\bot})=8$, the OPS $\cU_{A|BC}$ is an SUCPB by Lemma~\ref{lem:sucpb}.

Further, since the OPB $\cup_{i=1}^9\cA_i$ given by Eq.~\eqref{eq:opb333}  in any bipartition of $\{A|BC, B|AC, C|AB\}$ corresponds to a similar tile structure in Fig.~\ref{fig:tile39}, we obtain that $\cU_{A|BC}$,  $\cU_{B|AC}$, and $\cU_{C|AB}$ are all SUCPBs. Thus $\cU$ is an SUCPB in every bipartition.
\end{proof}
\vspace{0.4cm}

A similar construction of UPB in $\bbC^{d_1}\otimes \bbC^{d_2}\otimes \bbC^{d_3}$ for $d_1,d_2,d_3\geq 3$ was given in \cite{shi2021strong}. For the same discussion as Lemma~\ref{lem:333}, we have the following theorem.

\begin{theorem}\label{thm:333}
In $\bbC^{d_1}\otimes \bbC^{d_2}\otimes \bbC^{d_3}$, $d_1,d_2,d_3\geq 3$, there exists a UPB which is an SUCPB in every bipartition.
\end{theorem}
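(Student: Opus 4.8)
The plan is to start from the explicit UPB in $\bbC^{d_1}\otimes\bbC^{d_2}\otimes\bbC^{d_3}$ constructed in Ref.~\cite{shi2021strong}, which generalizes the $\bbC^3\otimes\bbC^3\otimes\bbC^3$ example of Eqs.~\eqref{eq:opb333}--\eqref{eq:U333}: it is built from a tripartite tile partition by deleting one corner state from each tile block $\cA_i$ and adjoining the fully symmetric stopper state $\ket{S}$. Calling this UPB $\cU$, the goal is to prove that $\cU_{A|BC}$, $\cU_{B|AC}$, and $\cU_{C|AB}$ are each an SUCPB. For this I would invoke Lemma~\ref{lem:sucpb}: it suffices to show that the product states lying in the complementary subspace $\cH_{\cU_{A|BC}}^{\bot}$ fail to span it.

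First I would fix one bipartition, say $A|BC$, and regroup the tripartite basis as a bipartite one in $\bbC^{d_1}\otimes\bbC^{d_2 d_3}$; as in Lemma~\ref{lem:333}, the OPB $\cup_i\cA_i$ then corresponds to a genuine tile structure on the $d_1\times(d_2 d_3)$ grid. Writing $\cU_{A|BC}=\cU_1\cup\{\ket{S}\}$ with $\cU_1$ the retained non-corner states, one has $\cH_{\cU_{A|BC}}^{\bot}\subset\cH_{\cU_1}^{\bot}$, and the latter is spanned by the deleted corner states; hence, exactly as in Example~\ref{example:34}, every product state $\ket{\psi}\in\cH_{\cU_{A|BC}}^{\bot}$ is a linear combination of these corners and therefore corresponds to a $d_1\times(d_2 d_3)$ matrix $M$ whose entries are constant on the tiles, the free parameters being the corner coefficients $a_i$. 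The two defining constraints are then $\rank(M)=1$ (product) and $\text{sum}(M)=0$ (orthogonality to $\ket{S}$).

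The technical heart is the rank-one analysis of this tile-patterned matrix. I would argue that a rank-one matrix constant on each tile must vanish off a single tile (or a very restricted union of tiles), since two distinct tiles generically share neither a full common row pattern nor a full common column pattern; this forces all but one or two of the $a_i$ to vanish, after which $\text{sum}(M)=0$ pins down one product state (up to scalar) per admissible tile, mirroring the four explicit states found in Lemma~\ref{lem:333}. Consequently the product states in $\cH_{\cU_{A|BC}}^{\bot}$ lie in a proper subspace whose dimension is strictly below $\dim\cH_{\cU_{A|BC}}^{\bot}$, so they cannot span the complement, and Lemma~\ref{lem:sucpb} yields that $\cU_{A|BC}$ is an SUCPB. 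Because the tile construction is symmetric under permuting the parties, the identical argument applies verbatim to $B|AC$ and $C|AB$, and that $\cU$ is itself a UPB is guaranteed by Ref.~\cite{shi2021strong}. The main obstacle is precisely this general rank-one/sum-zero enumeration: unlike the $3\times3\times3$ case one cannot simply list all solutions, so one needs a clean structural lemma showing that a tile-constant rank-one matrix is essentially supported on a single tile, together with a uniform count guaranteeing that the resulting product states remain too few to span $\cH^{\bot}$ for every $d_1,d_2,d_3\geq 3$.
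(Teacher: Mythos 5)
Your proposal takes essentially the same route as the paper: the paper's entire proof of this theorem is to cite the generalized UPB of Ref.~\cite{shi2021strong} and assert that ``the same discussion as Lemma~\ref{lem:333}'' applies, i.e.\ precisely your scheme of regrouping each bipartition into a bipartite tile structure, reducing product states in the complement to tile-constant matrices with $\rank(M)=1$ and $\text{sum}(M)=0$, and invoking Lemma~\ref{lem:sucpb}. The general rank-one enumeration you honestly flag as the main obstacle is exactly the step the paper also leaves implicit; just note that the surviving product states need not be supported on a single tile (each of the four solutions in Lemma~\ref{lem:333} lives on a union of two tiles forming a rank-one pattern), as your parenthetical hedge already anticipates.
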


Next, we consider the four-partite UPB. The following UPB in $\bbC^3\otimes \bbC^3\otimes \bbC^3\otimes \bbC^3$ was given in \cite{shi2021strong}, which is constructed from the tile structure in four-partite system. Consider an OPB in $\bbC^3\otimes \bbC^3\otimes \bbC^3\otimes \bbC^3$,

   	\begin{equation*}
		\begin{aligned}
			\cA_1:=\{&\ket{\psi_1(i,j,k)}=\ket{\xi_i}_A\ket{\eta_j}_B|0\rangle_C|\xi_k\rangle_D\\
			&\mid (i,j,k)\in\bbZ_{2}\times \bbZ_{2} \times\bbZ_{2}  \},\\	\cA_2:=\{&\ket{\psi_2(i,j,k)}=\ket{\xi_i}_A\ket{2}_B|\eta_j\rangle_C|\eta_k\rangle_D\\
			&\mid      (i,j,k)\in\bbZ_{2}\times \bbZ_{2} \times\bbZ_{2}  \},\\						
			\cA_3:=\{&\ket{\psi_3(i,j,k)}=\ket{\xi_i}_A\ket{\xi_j}_B|\xi_k\rangle_C|2\rangle_D\\
			&\mid (i,j,k)\in\bbZ_{2}\times \bbZ_{2} \times\bbZ_{2}  \},\\
			\cA_4:=\{&\ket{\psi_4(i,j,k)}=\ket{\xi_i}_A\ket{2}_B\ket{0}_C\ket{2}_D\mid i\in\bbZ_{2}\},\\
			\cA_5:=\{&\ket{\psi_5(i,j,k)}=\ket{2}_A\ket{\eta_i}_B|\xi_j\rangle_C|\eta_k\rangle_D\\
			&\mid (i,j,k)\in\bbZ_{2}\times \bbZ_{2} \times\bbZ_{2}  \},\\	
			\cA_6:=\{&\ket{\psi_6(i,j,k)}=\ket{2}_A\ket{\eta_i}_B\ket{0}_C\ket{0}_D\mid i\in\bbZ_{2}\},\\
			\cA_7:=\{&\ket{\psi_7(i,j,k)}=\ket{2}_A\ket{0}_B\ket{\xi_i}_C\ket{2}_D\mid i\in\bbZ_{2}\},\\
			\cA_8:=\{&\ket{\psi_8(i,j,k)}=\ket{2}_A\ket{2}_B\ket{2}_C\ket{\eta_i}_D\mid i\in\bbZ_{2}\},\\
			\cA_9:=\{&\ket{\psi_9(i,j,k)}=\ket{\eta_i}_A\ket{\xi_j}_B|2\rangle_C|\eta_k\rangle_D\\
			&\mid (i,j,k)\in\bbZ_{2}\times \bbZ_{2} \times\bbZ_{2} \},\\
			\cA_{10}:=\{&\ket{\psi_{10}(i,j,k)}=\ket{\eta_i}_A\ket{0}_B|\xi_j\rangle_C|\xi_k\rangle_D\\&\mid (i,j,k)\in\bbZ_{2}\times \bbZ_{2} \times\bbZ_{2}  \},\\	
      \end{aligned}
      \end{equation*}
         	\begin{equation}\label{eq:opb3333}
		\begin{aligned}
			\cA_{11}:=\{&\ket{\psi_{11}(i,j,k)}=\ket{\eta_i}_A\ket{\eta_j}_B|\eta_k\rangle_C|0\rangle_D\\&\mid (i,j,k)\in\bbZ_{2}\times \bbZ_{2} \times\bbZ_{2}   \},\\
			\cA_{12}:=\{&\ket{\psi_{12}(i,j,k)}=\ket{\eta_i}_A\ket{0}_B\ket{2}_C\ket{0}_D\mid i\in\bbZ_{2}\},\\
					\cA_{13}:=\{&\ket{\psi_{13}(i,j,k)}=\ket{0}_A\ket{\xi_i}_B|\eta_j\rangle_C|\xi_k\rangle_D\\
			&\mid (i,j,k)\in\bbZ_{2}\times \bbZ_{2} \times\bbZ_{2}  \},\\
			\cA_{14}:=\{&\ket{\psi_{14}(i,j,k)}=\ket{0}_A\ket{\xi_i}_B\ket{2}_C\ket{2}_D\mid i\in\bbZ_{2}\},\\
			\cA_{15}:=\{&\ket{\psi_{15}(i,j,k)}=\ket{0}_A\ket{2}_B\ket{\eta_i}_C\ket{0}_D\mid i\in\bbZ_{2}\},\\
			\cA_{16}:=\{&\ket{\psi_{16}(i,j,k)}=\ket{0}_A\ket{0}_B\ket{0}_C\ket{\xi_i}_D\mid i\in\bbZ_{2}\},\\
			\cA_{17}:=\{&\ket{1}_A\ket{1}_B\ket{1}_C\ket{1}_D\},\\
		\end{aligned}
\end{equation}
where $\ket{\eta_s}_X=\ket{0}_X+(-1)^s\ket{1}_X$, $\ket{\xi_s}_X=\ket{1}_X+(-1)^s\ket{2}_X$ for $s\in\bbZ_2$, and $X\in\{A,B,C,D\}$. The  ``stopper" state is,
\begin{equation}\label{eq:3333_S}
\begin{aligned}
    \ket{S}=&(\ket{0}+\ket{1}+\ket{2})_A(\ket{0}+\ket{1}+\ket{2})_B(\ket{0}+\ket{1}+\ket{2})_C\\
    &(\ket{0}+\ket{1}+\ket{2})_D.
\end{aligned}
\end{equation}
Then
\begin{equation}\label{eq:U3333}
    \cV:=\cup_{i=1}^{16}(\cA_i \setminus\{\ket{\psi_i(0,0,0)}\})\cup\{\ket{S}\} 
\end{equation}
is a UPB in $\bbC^3\otimes \bbC^3\otimes \bbC^3\otimes \bbC^3$ \cite{agrawal2019genuinely}. We can show that this UPB is an SUCPB in every bipartition.
\begin{lemma}\label{lem:3333}
In $\bbC^3\otimes \bbC^3\otimes \bbC^3\otimes \bbC^3$, the UPB  $\cV$ given by Eq.~\eqref{eq:U3333}  is an SUCPB in every bipartition.
\end{lemma}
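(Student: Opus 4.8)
The plan is to apply Lemma~\ref{lem:sucpb} separately in each of the seven bipartitions, reducing each one to a finite enumeration of product states exactly as in Example~\ref{example:34} and Lemma~\ref{lem:333}. First I would set $\cS_1:=\cup_{i=1}^{16}(\cA_i\setminus\{\ket{\psi_i(0,0,0)}\})$, so that $\cV=\cS_1\cup\{\ket{S}\}$, and let $\cS_2$ be the complementary set consisting of the sixteen removed corner states $\ket{\psi_i(0,0,0)}$ together with $\cA_{17}$. Since $\cS_1\cup\cS_2=\cup_{i=1}^{17}\cA_i$ is an OPB and $\cH_{\cS_1}\subset\cH_{\cV}$, I obtain $\cH_{\cV}^{\bot}\subset\cH_{\cS_1}^{\bot}=\cH_{\cS_2}$. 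Hence every product state $\ket{\psi}\in\cH_{\cV}^{\bot}$ is a linear combination $\sum_i a_i\ket{\psi_i(0,0,0)}+a_{17}\ket{111}$ of the seventeen states of $\cS_2$, and both orthogonality to $\ket{S}$ and the rank-one (product) condition become constraints on the coefficients $a_i$.

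Next, for each cut I would regroup the four tensor factors and reinterpret the OPB as a bipartite tile structure, generalizing Fig.~\ref{fig:tile39}. For the four $1$-vs-$3$ cuts ($A|BCD$ and its relabellings) this yields a $3\times 27$ coefficient matrix $M$ whose entries are the $a_i$ laid out in tiles; for the three $2$-vs-$2$ cuts ($AB|CD$, $AC|BD$, $AD|BC$) it yields a $9\times 9$ matrix. In each case I would read off $M$ from the corner states, impose $\rank(M)=1$ and $\text{sum}(M)=0$, and solve. Since $\dim(\cH_{\cV}^{\bot})=81-65=16$, it suffices to check that the resulting product states span a subspace of dimension strictly less than $16$; Lemma~\ref{lem:sucpb} then certifies that the restriction of $\cV$ to that cut is an SUCPB, and doing this for all seven cuts proves the lemma.

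To keep the enumeration manageable I would exploit the permutation symmetry of the construction: the four $1$-vs-$3$ cuts should be equivalent under the party relabelling implicit in the tile layout, so that a full analysis of $A|BCD$ makes the remaining three automatic, and likewise the three $2$-vs-$2$ cuts should reduce to a single representative. For the $1$-vs-$3$ case the matrix has only three rows, so the rank-one condition is extremely rigid; I expect only a handful of product solutions, far below $16$, in direct analogy with the four solutions that appeared in Lemma~\ref{lem:333}.

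The hard part will be the $2$-vs-$2$ cuts, where $M$ is the nearly square $9\times 9$ matrix. There the rank-one constraint is much weaker, the tile pattern is more intricate, and there is a genuine risk of a larger family of product states. The crux of the proof is therefore the careful $9\times 9$ enumeration: determining exactly which zero/nonzero patterns of the $a_i$ are compatible with $\rank(M)=1$ and $\text{sum}(M)=0$, and verifying that even this family fails to span the full $16$-dimensional orthocomplement. If the symmetry reduction among the $2$-vs-$2$ cuts can be made rigorous, the whole lemma collapses to a single explicit, if tedious, linear-algebra computation.
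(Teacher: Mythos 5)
Your proposal is correct and follows essentially the same route as the paper: symmetry reduces the seven cuts to the representatives $A|BCD$ and $AB|CD$, each handled by laying out the coefficients $a_i$ of the seventeen removed states in a $3\times 27$ (resp.\ $9\times 9$) tile matrix, imposing $\rank(M)=1$ and $\text{sum}(M)=0$, and invoking Lemma~\ref{lem:sucpb} once the product states are seen to span strictly fewer than the $16$ dimensions of $\cH_{\cV}^{\bot}$. One detail runs opposite to your expectation: in the paper the $A|BCD$ cut yields four $3$-dimensional \emph{families} of product states (spanning $12$ dimensions, the tighter case), while the $9\times 9$ cut you flagged as the hard part collapses to just eight isolated product states.
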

\begin{proof}
We need to consider the bipartition set $\{A|BCD, B|ACD, C|ABD, D|ABC, AB|CD, AC|BD,$
$ AD|BC\}$. Since the OPB $\cup_{i=1}^{17}\cA_{i}$ given by Eqs.~\eqref{eq:opb3333} in any bipartition of $\{A|BCD, B|ACD, C|ABD, D|ABC\}$ (or $\{AB|CD, AC|BD,$  $AD|BC\}$) has a similar structure, we only need to consider $\cV_{A|BCD}$ and $\cV_{AB|CD}$.

For $\cV_{A|BCD}$, we assume that $\ket{\psi}\in\cH_{\cV_{A|BCD}}^{\bot}$ is a product state, then  $\ket{\psi}$ corresponds to a $3\times 27$ matrix,
\begin{widetext}
	\begin{figure}[h]
		\centering
		\includegraphics[scale=0.55]{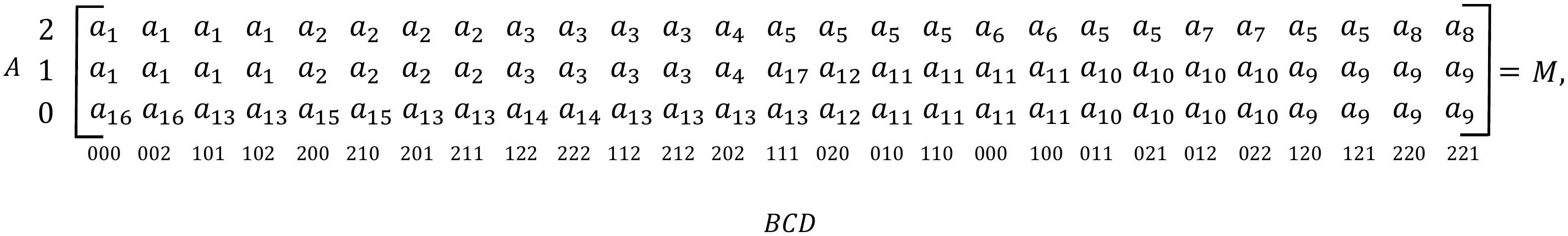}
	\end{figure}
\end{widetext}	
	
 \noindent where $a_i\in\bbC$ for $1\leq i\leq 17$, $\rank(M)=1$, and $\text{sum}(M)=0$. There are only four cases,
 \begin{enumerate}[(i)]
    \item $4a_1+4a_2+4a_3+a_4=0$, and  $a_i=0$ for $1\leq i\leq 17$ and $i\neq 1,2,3,4$;
    \item $4a_5+a_6+a_7+a_8=0$,  and  $a_i=0$ for $1\leq i\leq 17$ and $i\neq 5,6,7,8$;
    \item $4a_9+4a_{10}+4a_{11}+a_{12}=0$, and $a_i=0$ for $1\leq i\leq 17$ and $i\neq 9,10,11,12$;
   \item $4a_{13}+a_{14}+a_{15}+a_{16}=0$, and  $a_i=0$ for $1\leq i\leq 17$ and $i\neq 13,14,15,16$.
\end{enumerate}
Then $\ket{\psi}$ must belong to one of the four subspaces,
\begin{enumerate}[(i)]
    \item $O_1=\{(\ket{1}+\ket{2})_A(a_1(\ket{000}+\ket{002}+\ket{101}+\ket{102})+a_2(\ket{200}+\ket{210}+\ket{201}+\ket{211})+a_3(\ket{122}+\ket{222}+\ket{112}+\ket{212})+a_4\ket{202})_{BCD}\mid 4a_1+4a_2+4a_3+a_4=0\}$;
    \item $O_2=\{(\ket{2})_A(a_5(\ket{111}+\ket{020}+\ket{010}+\ket{110}+\ket{011}+\ket{021}+\ket{120}+\ket{121})+a_6(\ket{000}+\ket{100})+a_7(\ket{012}+\ket{022})+a_8(\ket{220}+\ket{221}))_{BCD}\mid 4a_5+a_6+a_7+a_8=0 \}$;
    \item $O_3=\{(\ket{0}+\ket{1})_A(a_9(\ket{221}+\ket{220}+\ket{121}+\ket{120})+a_{10}(\ket{022}+\ket{012}+\ket{021}+\ket{011})+a_{11}(\ket{100}+\ket{000}+\ket{110}+\ket{010})+a_{12}\ket{020})_{BCD}\mid 4a_9+4a_{10}+4a_{11}+a_{12}=0\}$;
   \item $O_4=\{(\ket{0})_A(a_{13}(\ket{111}+\ket{202}+\ket{212}+\ket{112}+\ket{211}+\ket{201}+\ket{102}+\ket{101})+a_{14}(\ket{222}+\ket{122})+a_{15}(\ket{210}+\ket{200})+a_{16}(\ket{002}+\ket{000}))_{BCD}\mid 4a_{13}+a_{14}+a_{15}+a_{16}=0 \}$,
\end{enumerate}
where $\dim(O_i)=3$ for $1\leq i\leq 4$, and $O_i\bot O_j$ for $1\leq i\neq j\leq 4$.
Then $\dim (O_1+O_2+O_3+O_4)=12$.  Since $\dim(\cH_{\cV_{A|BCD}}^{\bot})=16$, the OPS $\cV_{A|BCD}$ is an SUCPB by Lemma~\ref{lem:sucpb}.

For $\cV_{AB|CD}$, we assume that $\ket{\phi}\in\cH_{\cV_{AB|CD}}^{\bot}$ is a product state, then  $\ket{\phi}$ corresponds to a $9\times 9$ matrix,

	\begin{figure}[h]
		\centering
		\includegraphics[scale=0.6]{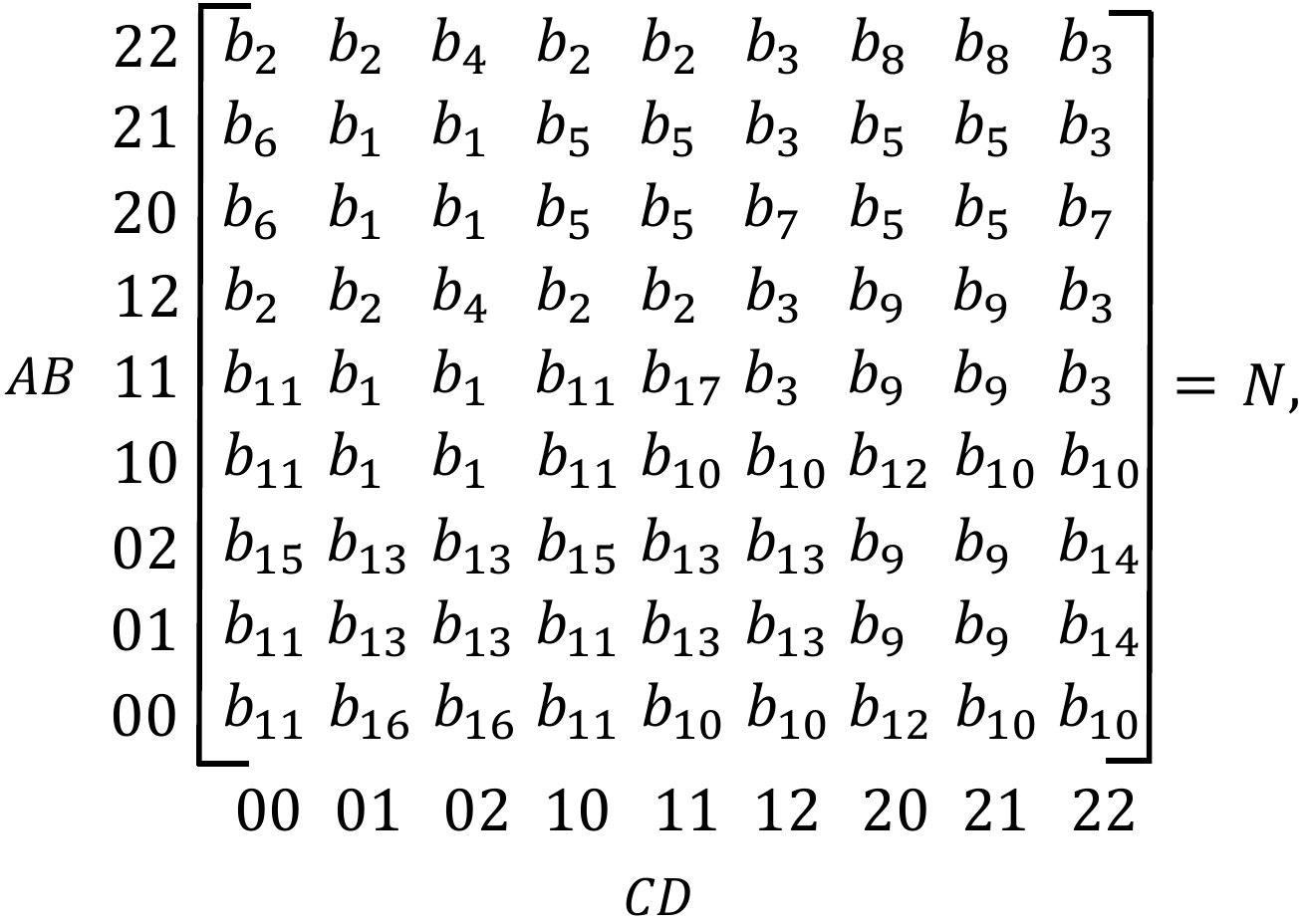}
	\end{figure}

\noindent where $b_i\in\bbC$ for $1\leq i\leq 17$, $\rank(N)=1$, and $\text{sum}(N)=0$. There are only eight cases,
\begin{enumerate}[(i)]
    \item $4b_1+b_{16}=0$, and  $b_i=0$ for $1\leq i\leq 17$ and $i\neq 1,16$;
    \item $4b_2+b_{4}=0$,  and  $b_i=0$ for $1\leq i\leq 17$ and $i\neq 2,4$;
    \item $4b_3+b_{7}=0$,  and  $b_i=0$ for $1\leq i\leq 17$ and $i\neq 3,7$;
    \item $4b_5+b_{6}=0$, and  $b_i=0$ for $1\leq i\leq 17$ and $i\neq 5,6$;
    \item $4b_9+b_{8}=0$, and  $b_i=0$ for $1\leq i\leq 17$ and $i\neq 8,9$;
    \item $4b_{10}+b_{12}=0$,  and  $b_i=0$ for $1\leq i\leq 17$ and $i\neq 10,12$;
    \item $4b_{11}+b_{15}=0$,  and  $b_i=0$ for $1\leq i\leq 17$ and $i\neq 11,15$;
    \item $4b_{13}+b_{14}=0$, and  $b_i=0$ for $1\leq i\leq 17$ and $i\neq 13,14$.
\end{enumerate}
It means that there are only eight product states in $\cH_{\cV_{AB|CD}}^{\bot}$: $(\ket{10}+\ket{11}+\ket{20}+\ket{21}-4\ket{00})_{AB}(\ket{01}+\ket{02})_{CD}$, $(\ket{12}+\ket{22})_{AB}(\ket{00}+\ket{01}+\ket{10}+\ket{11}-4\ket{02})_{CD}$,
$(\ket{11}+\ket{12}+\ket{21}+\ket{22}-4\ket{20})_{AB}(\ket{12}+\ket{22})_{CD}$,
$(\ket{20}+\ket{21})_{AB}(\ket{10}+\ket{11}+\ket{20}+\ket{21}-4\ket{00})_{CD}$,
$(\ket{01}+\ket{02}+\ket{11}+\ket{12}-4\ket{22})_{AB}(\ket{20}+\ket{21})_{CD}$,
$(\ket{00}+\ket{10})_{AB}(\ket{11}+\ket{12}+\ket{21}+\ket{22}-4\ket{20})_{CD}$,
$(\ket{00}+\ket{01}+\ket{10}+\ket{11}-4\ket{02})_{AB}(\ket{00}+\ket{10})_{CD}$,
$(\ket{01}+\ket{02})_{AB}(\ket{01}+\ket{02}+\ket{11}+\ket{12}-4\ket{22})_{CD}$.
Since $\dim(\cH_{\cV_{AB|CD}}^{\bot})=16$, the OPS $\cV_{AB|CD}$ is an SUCPB by Lemma~\ref{lem:sucpb}.

Above all, $\cV$ is an SUCPB in every bipartition.
\end{proof}
\vspace{0.4cm}

The construction of UPB in $\bbC^3\otimes \bbC^3\otimes \bbC^3 \otimes \bbC^3$ was generalized to any four-partite system $\bbC^{d_1}\otimes \bbC^{d_2}\otimes \bbC^{d_3}\otimes \bbC^{d_4}$ for $d_1,d_2,d_3,d_4\geq 3$ \cite{shi2021strong}.
Obviously, for the same discussion as Lemma~\ref{lem:333}, we have the following theorem.

\begin{theorem}\label{thm:3333}
In $\bbC^{d_1}\otimes \bbC^{d_2}\otimes \bbC^{d_3}\otimes \bbC^{d_4}$, $d_1,d_2,d_3,d_4\geq 3$, there exists a UPB which is an SUCPB in every bipartition.
\end{theorem}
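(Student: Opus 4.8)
The plan is to lift the verification of Lemma~\ref{lem:3333} to the general construction. First I would recall from \cite{shi2021strong} the explicit UPB $\cV$ in $\bbC^{d_1}\otimes\bbC^{d_2}\otimes\bbC^{d_3}\otimes\bbC^{d_4}$, which is built from a four-partite tile structure by deleting one ``corner'' state from each block $\cA_i$ and appending the stopper $\ket{S}$. Writing $\cV=\cS_1\cup\{\ket{S}\}$, where $\cS_1$ collects the surviving block states, one has $\cH_{\cV}^{\bot}\subseteq\cH_{\cS_1}^{\bot}$, and the latter is spanned by those states of the ambient OPB that were removed in forming $\cS_1$. Hence every product state in $\cH_{\cV}^{\bot}$ is a linear combination of these removed states that is in addition orthogonal to $\ket{S}$.

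Next, for each bipartition I would collapse the four tensor factors into two, converting the four-partite OPB into a bipartite OPB in $\bbC^{D_1}\otimes\bbC^{D_2}$ whose state supports form a bipartite tile structure, exactly as Fig.~\ref{fig:tile39} arises from Eq.~\eqref{eq:opb333}. Because the construction is symmetric under permuting the parties, the seven bipartitions split into two symmetry classes---one party versus three, and two versus two---so it suffices to treat one representative of each type, say $\cV_{A|BCD}$ and $\cV_{AB|CD}$, just as in Lemma~\ref{lem:3333}.

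In each representative I would encode a product state $\ket{\psi}\in\cH_{\cV}^{\bot}$ by its coefficient matrix $M$ and impose the two constraints $\rank(M)=1$ (product form) and $\text{sum}(M)=0$ (orthogonality to $\ket{S}$). The structural fact inherited from the tile arrangement is that the rank-one condition forces the support of $M$ to lie inside a single group of tiles sharing a common row or column block; within each such group the zero-sum condition then removes exactly one degree of freedom. Consequently the product states are confined to a union of mutually orthogonal subspaces $O_1,\ldots,O_t$ whose joint span $O_1+\cdots+O_t$ has dimension strictly smaller than $\dim(\cH_{\cV}^{\bot})$. Lemma~\ref{lem:sucpb} then yields that $\cV$ is an SUCPB in that bipartition, and the symmetry classes extend this to all seven bipartitions.

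The main obstacle is this decoupling step in arbitrary local dimensions: one must check that enlarging each $d_i$ only lengthens the rows and columns occupied by each tile group, without creating rank-one solutions whose support straddles two groups, and that the resulting dimension count still leaves a strict gap below $\dim(\cH_{\cV}^{\bot})$. For the $3^{\otimes 4}$ case this was a finite case analysis producing four (respectively eight) product states; in general the counts grow with the $d_i$, but the same zero-sum-per-group bookkeeping preserves the strict inequality, after which Lemma~\ref{lem:sucpb} closes the argument.
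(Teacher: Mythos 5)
Your proposal matches the paper's route essentially step for step: the paper proves the $\bbC^3\otimes\bbC^3\otimes\bbC^3\otimes\bbC^3$ case in detail (Lemma~\ref{lem:3333}) by reducing, via symmetry, to the two representatives $\cV_{A|BCD}$ and $\cV_{AB|CD}$, encoding a putative product state in $\cH_{\cV}^{\bot}$ (which lies in the span of the removed states) as a coefficient matrix with $\rank(M)=1$ and $\text{sum}(M)=0$, confining the solutions to low-dimensional orthogonal subspaces that cannot span $\cH_{\cV}^{\bot}$, and invoking Lemma~\ref{lem:sucpb}; it then obtains general $d_1,d_2,d_3,d_4\geq 3$ by citing the generalized construction of \cite{shi2021strong} and asserting ``the same discussion'' applies. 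The obstacle you flag for arbitrary local dimensions is precisely the step the paper also leaves informal, so your write-up is faithful to, and no less rigorous than, the paper's own argument.
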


However, not all UPBs have this property. For example, let
\begin{equation*}
\begin{aligned}
    \ket{\psi_1}&=\ket{0}_A(\ket{0}-\ket{1})_B,\\
    \ket{\psi_2}&=(\ket{0}-\ket{1})_A\ket{2}_B,\\
    \ket{\psi_3}&=\ket{2}_A(\ket{1}-\ket{2})_B,\\
    \ket{\psi_4}&=(\ket{1}-\ket{2})_A\ket{0}_B,\\
    \ket{\psi_5}&=(\ket{0}+\ket{1}+\ket{2})_A(\ket{0}+\ket{1}+\ket{2})_B,
\end{aligned}
\end{equation*}
then $\cup_{i=1}^5\ket{\psi_i}$ is a UPB in $\bbC^3\otimes \bbC^3$ \cite{divincenzo2003unextendible}. We can construct a UPB in $\bbC^3\otimes \bbC^3\otimes \bbC^3$ from $\cup_{i=1}^5\ket{\psi_i}$  as follows,
\begin{equation*}
\begin{aligned}
\cA_1&:=\{\ket{\psi_i}\ket{0}_C\mid 1\leq i\leq 5\},\\
\cA_2&:=\{\ket{i}_A\ket{j}_B\ket{1}_C\mid i,j\in \bbZ_3\},\\
\cA_3&:=\{\ket{i}_A\ket{j}_B\ket{2}_C\mid i,j\in \bbZ_3\}.\\
\end{aligned}
\end{equation*}
Let $\cW:=\cup_{i=1}^3\cA_i$. For any product state $\ket{\varphi}=\ket{\varphi_1}_A\ket{\varphi_2}_B\ket{\varphi_3}_C\in \cW^{\bot}$, since $\ket{\varphi}$ is orthogonal to any state in $\cA_2\cup \cA_3$,  $\ket{\varphi_3}_C$ must be $\ket{0}_C$. Further, since $\cup_{i=1}^5\ket{\psi_i}$ is a UPB, it means that $\ket{\varphi_1}_A\ket{\varphi_2}_B$ cannot be a product state. Thus $\cW$ is a UPB in $\bbC^3\otimes \bbC^3\otimes \bbC^3$. Nevertheless, $\cW$ is not a UCPB in every bipartition. There must exist four orthogonal states $\{\ket{\psi_6},\ket{\psi_7},\ket{\psi_8},\ket{\psi_9}\}$ such that $\cup_{i=1}^9\ket{\psi_i}$ is an orthogonal basis in $\bbC^3\otimes \bbC^3$. Then $\{\cup_{i=6}^9\ket{\psi_i}\ket{0}_C\}\cup\cW_{AB|C}$ is an OPB in $AB|C$ bipartition. Thus, $\cW$ is not a UCPB in $AB|C$ bipartition.

Next, we consider the application. Note that all UPBs in Theorems~\ref{thm:333} and \ref{thm:3333} cannot be perfectly distinguished under local POVMs and classical communication in any bipartition. These UPBs can be used for local hiding of information \cite{shi2022strong}. For example, assume the information is encoded in the UPB $\cU$ given by Eq.~\eqref{eq:U333}, and the boss send it to his three subordinates: A, B and  C. These three subordinates are from different offices. They can only perform local POVMs, and communicate classic information by telephones.  In this case, the three subordinates cannot obtain the full information, even if any two of them are collusive.   A and B are collusive means that A and B are from the same office and  can perform joint measurements. See also Fig.~\ref{fig:ABC}. Further, Ref.~\cite{shi2021strong} showed a stronger property. Any UPB in Theorems~\ref{thm:333} and \ref{thm:3333} is locally irreducible\footnote{A set of multipartite orthogonal states is
locally irreducible if it is not possible to eliminate one or
more states from the set by orthogonality-preserving local POVMs  \cite{Halder2019Strong}. } in every bipartition, which shows the phenomenon of strong quantum nonlocality without entanglement \cite{Halder2019Strong}.


	\begin{figure}[t]
		\centering
		\includegraphics[scale=0.5]{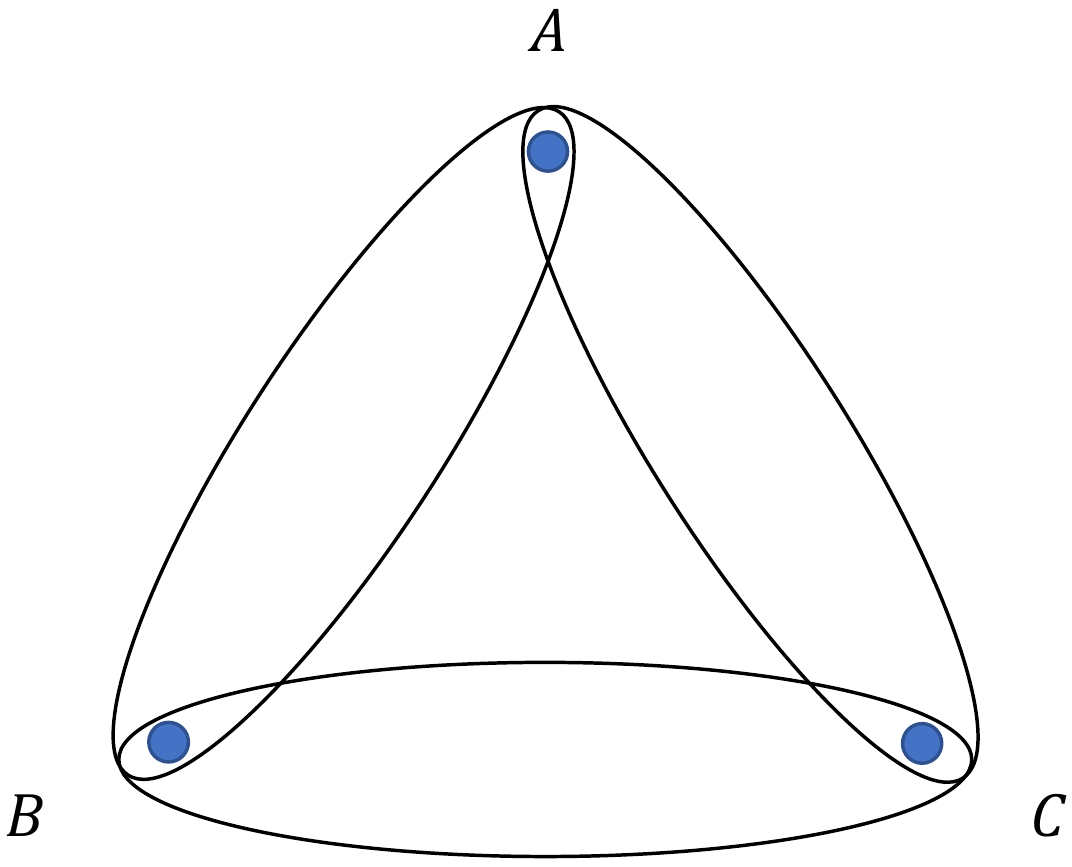}
		\caption{The information is encoded to a tripartite UPB which is an SUCPB is every bipartition, and the boss send it to his three subordinates: A, B, and C. Even if any two of them are collusive, the three subordinates cannot obtain the full information under local POVMs and classical communication.   } \label{fig:ABC}
	\end{figure}

\section{UPBs in every bipartition}\label{sec:gupb}
There exists another open question for UPBs \cite{demianowicz2018unextendible}: can we find a UPB, which is still a UPB in every bipartition? Such a UPB can be used to construct genuinely entangled subspace, and it cannot be perfectly distinguished under local POVMs and classical communication in any bipartition. Unfortunately, any UPB in Sec.~\ref{sec:sucpb} is not a UPB in every bipartition. We will give a sufficient condition for the existence of such a UPB.

We can also generalize the tile structures to multipartite systems.  A tile structure $\cT=\cup_{i=1}^s t_i$ in $\bbC^{d_1}\otimes \bbC^{d_2}\otimes \cdots \otimes \bbC^{d_n}$ is a $d_1\times d_2\times \cdots\times d_n$ hypercube, which can be partitioned into $s$ disjoint tiles $\{t_i\}_{i=1}^s$. Each tile $t_i$ is a hypercube, and it can be expressed by
\begin{equation}
\begin{aligned}
t_i=&\{x_0^{(1)},x_1^{(1)},\ldots,x_{k_1-1}^{(1)}\}_{A_1}\times \{x_0^{(2)},x_1^{(2)},\ldots,x_{k_2-1}^{(2)}\}_{A_2}\\
&\times \cdots\times \{x_0^{(n)},x_1^{(n)},\ldots,x_{k_n-1}^{(n)}\}_{A_n},
\end{aligned}
\end{equation}
where $\{x_0^{(j)},x_1^{(j)},\ldots,x_{k_j-1}^{(j)}\}$ is a subset of $\bbZ_{d_j}$ for $1\leq j\leq n$.
For tile $t_i$, we can construct an OPS of size $k_1k_2\cdots k_n$ in $\bbC^{d_1}\otimes \bbC^{d_2}\otimes \cdots \otimes \bbC^{d_n}$,
\begin{equation}\label{eq:genA_i}
\begin{aligned}
    \cA_i=\{&\otimes_{j=1}^n \left(\sum_{e_j\in\bbZ_{k_j}}m_{a_{j},e_{j}}^{(j)}\ket{x_{e_{j}}^{(j)}}\right)_{A_j}\\
    &\mid a_{j}\in\bbZ_{k_j}, 1\leq j\leq n\},
\end{aligned}
\end{equation}
where each coefficient matrix $M^{(j)}=(m_{a_{j},e_{j}}^{(j)})_{a_{j},e_{j}\in \bbZ_{k_j}}$ is a  $k_j\times k_j$ row orthogonal matrix, and $m_{0,e_{j}}^{(j)}=1$ for $e_j\in\bbZ_{k_j}$. Then we obtain an OPB $\cB:=\cup_{i=1}^s\cA_i$ in $\bbC^{d_1}\otimes \bbC^{d_2}\otimes \cdots \otimes \bbC^{d_n}$.
Note that 
\begin{equation}\label{eq:genpsi_i}
    \ket{\psi_i}=\otimes_{j=1}^n \left(\sum_{e_j\in\bbZ_{k_j}}\ket{x_{e_{j}}^{(j)}}\right)_{A_j}\in \cA_i.
\end{equation}
The ``stopper" state is
\begin{equation}\label{eq:genS}
  \ket{S}=\otimes_{j=1}^n\left(\sum_{r\in\bbZ_{d_j}}\ket{r}\right)_{A_j}.
\end{equation}
We may wonder whether the OPS 
\begin{equation}\label{eq:X}
   \cX:=\cup_{i=1}(\cA_i\setminus\{\ket{\psi_i}\})\cup\{\ket{S}\} 
\end{equation} is a UPB that is still a UPB in every bipartition. 

For any bipartition $C\mid D$, where $C,D\subset \{1,2,\ldots ,n\}$, and $C\cup D= \{1,2,\ldots ,n\}$, the OPB $\cB$ must correspond to a  tile structure $\cT_{C\mid D}$ in $\bbC^{h_1}\otimes \bbC^{h_2}$, where $h_1=\prod_{g\in C}d_{g}$ and  $h_2=\prod_{g\in D}d_{g}$ (For example, see Fig.~\ref{fig:tile39}). By using Lemma~\ref{lem:upb}, if any $r$ ($2\leq r\leq s-1$) tiles in $\cT_{C\mid D}$ cannot form a rectangle, then $\cX_{C\mid D}$ is a UPB in $\bbC^{h_1}\otimes \bbC^{h_2}$. Note that in this case, $\cX$ is also a UPB in $\bbC^{d_1}\otimes \bbC^{d_2}\otimes \cdots \otimes \bbC^{d_n}$. This is because if $\cX$ is not a UPB, then there exists a product state $\ket{\psi}$ in $\cX^{\bot}$, and the product state $\ket{\psi}_{C\mid D}$ in bipartition $C|D$   belongs to $\cX_{C\mid D}^{\bot}$, which contradicts $\cX_{C\mid D}$ being a UPB in $\bbC^{h_1}\otimes \bbC^{h_2}$. Now, we have the following theorem.
\vspace{0.3cm}
\begin{theorem}\label{thm:gupb}
Consider a tile structure $\cT=\cup_{i=1}^s t_i$ ($s\geq 5$) in $\bbC^{d_1}\otimes \bbC^{d_2}\otimes \cdots \otimes \bbC^{d_n}$. For any bipartition $C\mid D$, if  any $r$ ($2\leq r\leq s-1$) tiles in $\cT_{C\mid D}$ cannot form a rectangle, then the OPS $\cX$ given by Eq.~\eqref{eq:X} is a UPB in $\bbC^{d_1}\otimes \bbC^{d_2}\otimes \cdots \otimes \bbC^{d_n}$, which is still a UPB in every bipartition.
\end{theorem}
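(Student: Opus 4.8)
The plan is to reduce the multipartite claim to the bipartite Lemma~\ref{lem:upb}. The key observation is that fixing a bipartition $C\mid D$ is nothing more than regrouping the tensor factors: the hypercube $\bbC^{d_1}\otimes\cdots\otimes\bbC^{d_n}$ becomes the rectangle $\bbC^{h_1}\otimes\bbC^{h_2}$ with $h_1=\prod_{g\in C}d_g$ and $h_2=\prod_{g\in D}d_g$, under the flattening that sends a multi-index $(r_1,\dots,r_n)$ to the pair formed by its combined $C$-index and its combined $D$-index. First I would check that this flattening carries the multipartite tile structure $\cT=\cup_{i=1}^s t_i$ to an honest bipartite tile structure $\cT_{C\mid D}=\cup_{i=1}^s t_i^{C\mid D}$: each tile $t_i$ is a Cartesian product of coordinate sets, so its image is again a Cartesian product of a row-coordinate set (from the factors in $C$) and a column-coordinate set (from the factors in $D$), i.e.\ a rectangle; disjointness and covering are preserved because the flattening is a bijection on indices, and the number of tiles stays $s\ge 5$.

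Next I would verify that the OPS construction is compatible with this regrouping, so that $\cX_{C\mid D}$ is \emph{exactly} the bipartite OPS of Eq.~\eqref{eq:St} attached to $\cT_{C\mid D}$. Writing a state of $\cA_i$ from Eq.~\eqref{eq:genA_i} as $\otimes_{j=1}^n(\cdots)_{A_j}$ and splitting the tensor product along $C\mid D$, the $C$-factor is a single vector in $\bbC^{h_1}$ whose coefficient on the flattened basis vector $(x_{e_g}^{(g)})_{g\in C}$ is $\prod_{g\in C}m^{(g)}_{a_g,e_g}$; the array of these coefficients is the Kronecker product $\bigotimes_{g\in C}M^{(g)}$. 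Here I would invoke two facts: a Kronecker product of row-orthogonal matrices is row-orthogonal (inner products factor over the tensor slots), and its first row is all ones because each $M^{(g)}$ has $m^{(g)}_{0,e}=1$. Thus the induced $C$- and $D$-side coefficient matrices satisfy precisely the hypotheses of the bipartite construction in Eq.~\eqref{eq:A_i}. The same splitting sends the stopper $\ket{S}$ of Eq.~\eqref{eq:genS} to the bipartite stopper of Eq.~\eqref{eq:mn_S}, and sends the removed state $\ket{\psi_i}$ of Eq.~\eqref{eq:genpsi_i} to the bipartite $\ket{\psi_i(0,0)}$ (both arising from the all-ones first rows). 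Consequently $\cX_{C\mid D}$ coincides with Eq.~\eqref{eq:St} for $\cT_{C\mid D}$, and by hypothesis no $r$ ($2\le r\le s-1$) tiles of $\cT_{C\mid D}$ form a rectangle, so Lemma~\ref{lem:upb} yields that $\cX_{C\mid D}$ is a UPB in $\bbC^{h_1}\otimes\bbC^{h_2}$. Since $C\mid D$ was arbitrary, $\cX$ is a UPB in every bipartition.

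Finally I would upgrade this to the full multipartite UPB property. Suppose, for contradiction, that $\cX$ is not a UPB in $\bbC^{d_1}\otimes\cdots\otimes\bbC^{d_n}$; then $\cH_{\cX}^{\bot}$ contains a fully product state $\ket{\psi}=\otimes_{j=1}^n\ket{\phi_j}$. Fixing any single bipartition $C\mid D$, the state $\ket{\psi}$ is still a product state of $\bbC^{h_1}\otimes\bbC^{h_2}$, and because inner products are unchanged by regrouping tensor factors, $\ket{\psi}$ is orthogonal to every element of $\cX_{C\mid D}$, i.e.\ $\ket{\psi}\in\cH_{\cX_{C\mid D}}^{\bot}$. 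This contradicts the bipartite UPB property just established, so $\cX$ is a UPB in the full system.

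The routine structural bookkeeping of the first two paragraphs is where the real content sits; the main obstacle is confirming that the regrouping genuinely lands inside the hypotheses of Lemma~\ref{lem:upb}, namely that the induced bipartite coefficient matrices remain row-orthogonal with an all-ones first row (the Kronecker-product computation) and that the images of the tiles remain rectangles. Once the identification of $\cX_{C\mid D}$ with Eq.~\eqref{eq:St} is in place, both the per-bipartition and the global UPB properties follow at once, the latter simply because a fully product state stays product — and orthogonality-preserving — under any coarsening of the tensor factors.
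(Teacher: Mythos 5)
Your proposal is correct and follows essentially the same route as the paper: flatten the tile structure along each bipartition $C\mid D$, invoke Lemma~\ref{lem:upb} to get that $\cX_{C\mid D}$ is a bipartite UPB, and then lift to the multipartite claim via the observation that a fully product state in $\cH_{\cX}^{\bot}$ would remain a product state orthogonal to $\cX_{C\mid D}$. In fact you supply a verification the paper leaves implicit --- that the Kronecker products $\bigotimes_{g\in C}M^{(g)}$ are again row-orthogonal with all-ones first row, so that $\cX_{C\mid D}$ is literally the OPS of Eq.~\eqref{eq:St} for $\cT_{C\mid D}$ --- which is a worthwhile addition rather than a deviation.
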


One can use computer to search the tile structure in Theorem~\ref{thm:gupb}. By exhaustive search, we show that such a tile structure does not exist in $\bbC^3\otimes \bbC^3\otimes \bbC^3$. However, we conjecture that such a tile structure may exist in a higher multipartite system.

\section{conclusion and discussion}\label{sec:con}
In this paper, we showed that there exist some  unextendible product bases that are   uncompletable product bases in every bipartition in $\bbC^{d_1}\otimes \bbC^{d_2}\otimes \bbC^{d_3}$ and $\bbC^{d_1}\otimes \bbC^{d_2}\otimes \bbC^{d_3}\otimes \bbC^{d_4}$ for $d_1,d_2,d_3,d_4\geq 3$, and $\bbC^3\otimes \bbC^3\otimes \bbC^3$ achieved the minimum system for the existence of such unextendible product bases. This result answers an open question proposed in \cite{divincenzo2003unextendible}.   We also showed that such unextendible product bases can be used for local hiding of information. Finding an unextendible product basis that is still an unextendible product basis in every bipartition can be challenging, and we gave a sufficient condition for the existence of such  an
unextendible product basis.

There are some interesting open questions left. How to find an unextendible product basis that is still an unextendible product basis in every bipartition by using Theorem~\ref{thm:gupb}? What is the minimum size of unextendible product basis that is an uncompletable product basis in every bipartition?

\section*{Acknowledgments}
\label{sec:ack}		
We thank Lin Chen for discussing this problem. The research of X.Z. and F.S. were supported by the NSFC under Grants No. 12171452 and No. 11771419,
the Anhui Initiative in Quantum Information Technologies under Grant No. AHY150200, the National Key
Research and Development Program of China (2020YFA0713100), and the Innovation Program for Quantum
Science and Technology (2021ZD0302904). M.-S.L. was supported by the NSFC under Grant No. 12005092, the China Postdoctoral Science
Foundation (2020M681996).

	\bibliographystyle{IEEEtran}
	\bibliography{reference}	
	
\end{document}